\newcommand{\1}{\mathds{1}}
\newcommand{\id}{\1}
\newcommand{\RR}{\mathbb{R}}
\newcommand{\CC}{\mathbb{C}}
\newcommand{\ZZ}{\mathbb{Z}}
\newcommand{\SO}{\mathbb{SO}}
\newcommand{\tr}{\mathrm{tr}}
\newcommand{\e}{\ensuremath\mathrm{e}}
\newcommand{\X}{\sigma^x}
\newcommand{\Y}{\sigma^y}
\newcommand{\Z}{\sigma^z}
\newcommand{\J}{\mathcal{J}}
\newcommand{\W}{\mathcal{W}}
\newcommand{\F}{F_\W}
\renewcommand{\S}{\mathcal{S}_{n}}
\newtheorem{theorem}{Theorem}
\newtheorem{lemma}[theorem]{Lemma}
\newtheorem{proposition}[theorem]{Proposition}
\newtheorem{definition}[theorem]{Definition}
 \newtheorem{corollary}[theorem]{Corollary}
\newcommand{\argdot}{{\,\cdot\,}}
\newcommand{\norm}[1]{\left\Vert #1 \right\Vert} 
\newcommand{\mnorm}[1]{\norm{#1}_{\max{}}} 
\newcommand{\snorm}[1]{\norm{#1}_\infty} 
\newcommand{\ket}[1]{\left.\left|{#1}\right.\right\rangle}
\newcommand{\bra}[1]{\left.\left\langle{#1}\right.\right|}
\newcommand{\ketbra}[2]{\ket{#1} \!\! \bra{#2}}
\newcommand\vac{{\ketbra{\emptyset}{\emptyset}}}
\newcommand\vacket{{\ket{\emptyset}}}
\newcommand{\cm}{M}
\renewcommand{\t}{\top}
\renewcommand{\th}{\text{T}}
\renewcommand{\i}{\ensuremath\mathrm{i}}
\definecolor{niceblue}{rgb}{0.33,0.5,0.8}%
\definecolor{yellowgreen}{RGB}{238,246,108}
\definecolor{yelloworange}{RGB}{253,184,19}
  \definecolor{jens}{rgb}{1,1,1}
  \definecolor{leandro}{rgb}{.4,.0,.9}
  \definecolor{marek}{rgb}{.5,.5,.2}
\newcommand{\fu}
  {{Dahlem Center for Complex Quantum Systems, 
    Freie Universit\"{a}t Berlin, 
    Germany}}
\newcommand{\rio}
  {{Instituto de F\'isica, 
    Universidade Federal do Rio de Janeiro, 
 		   P. O. Box 68528, 
 		   Rio de Janeiro, 
		   RJ 21941-972, 
		   Brazil}}
\newcommand{\natal}
  {{International Institute of Physics, 
    Federal University of Rio Grande do Norte, 
59070-405 Natal,
	   Brazil}}
\newcommand{\saopaolo}{{
		  ICTP South American Institute for Fundamental Research, 
Instituto de F\'isica Te\'orica, UNESP-Universidade Estadual Paulista R. Dr. Bento T. Ferraz 271, Bl. II, S\~ao Paulo 01140-070, SP, Brazil 	}}
\newcommand{\ug}
	{Institute of Theoretical Physics and Astrophysics, 
	University of Gda\'{n}sk, 
	Poland}
  \newcommand{\hhu}{
	Institute for Theoretical Physics,
	Heinrich Heine University D{\"u}sseldorf, 
	Germany
}
\begin{document}

\title{Fidelity witnesses for fermionic quantum simulations}

\author{M.\ Gluza}\affiliation{\fu}
\author{M.\ Kliesch}\affiliation{\ug}\affiliation{\hhu}
\author{J.\ Eisert}\affiliation{\fu}
\author{L.\ Aolita}\affiliation{\rio}\affiliation{\natal}\affiliation{\saopaolo}

\date{\today}

\begin{abstract}
The experimental interest and developments in quantum spin-1/2-chains has increased uninterruptedly over the last decade. In many instances, the target quantum simulation belongs to the broader class of non-interacting fermionic models, constituting an important benchmark. In spite of this class being analytically efficiently tractable, no direct certification tool has yet been reported for it. In fact, in experiments, certification has almost exclusively relied on notions of quantum state tomography scaling very unfavorably with the system size. Here, we develop experimentally-friendly fidelity witnesses for all pure fermionic Gaussian target states. Their expectation value yields a tight lower bound to the fidelity and can be measured efficiently. We derive witnesses in full generality in the Majorana-fermion representation and apply them to experimentally relevant spin-1/2 chains. 
Among others, we show how to efficiently certify strongly out-of-equilibrium dynamics in critical Ising chains. At the heart of the measurement scheme is a variant of importance sampling specially tailored to overlaps between covariance matrices. The method is shown to be robust against finite experimental-state infidelities.
\end{abstract}
\maketitle

\renewcommand{\i}{\ensuremath\mathrm{i}}
Quantum simulators are specific-purpose quantum devices that are able to efficiently simulate phenomena of interest thought to be
not directly accessible otherwise \cite{Fey82}. Already at scales of tens of particles they have the potential to 
outperform today's most powerful supercomputers and help us explain unclear physical effects, as well as give boosts in crucial technological areas \cite{cirac2012goals}. 
In addition, they constitute an intermediate milestone towards the ultimate goal of realizing 
large-scale universal quantum computers. This has fuelled  impressive experimental advances in multiple quantum technologies \cite{aspuru2012photonic,blatt2012quantum,houck2012chip_review,bloch2012quantum,Schneider12,EisFriGog15}.
A type of quantum many-body systems to whom experimental simulations have devoted considerable efforts over the last decade 
are given by one-dimensional (1D) lattices of interacting spin-1/2 particles, or \emph{spin-1/2 chains}, for short. In particular, 
even though they call into the efficiently classically simulable regime, the well-known transverse-field (TF) Ising and XY models have risen
to constitute important basic testbeds for the most advances experimental simulations, e.g. with ion-trap \cite{Friedenauer08,Kim10,Islam11,lanyon2016efficient}, superconducting-circuit \cite{barends2016digitized} and circuit quantum electrodynamics \cite{Yves} platforms.

At least two facts justify the significant interest in these specific models. The first one is that they display a vast physical richness: For instance, the TF Ising model---which is, actually, a subclass of the TF XY model---features a quantum phase transition \cite{Pfeuty70,sachdev2007quantum,PhysRevX.4.031008,francesco2012conformal} as well as topologically and spectrally interesting effects \cite{kitaev2001unpaired,you2014symmetry,PhysRevB.92.115137,pollmann2012symmetry,grimm2002spectrum}, and is relevant for quantum speed-ups in certain optimization problems \cite{nishimori2016exponential}. The second one is that, for nearest-neighbor interactions, they can be analytically solved, e.g. by mapping them into systems of free---i.e., non-interacting---fermions \cite{Pfeuty70}.
This allows for in-depth theoretical studies of their dynamics \cite{calabrese2012quantum,calabrese2012quantumII,calabrese2004entanglement,dziarmaga2005dynamics,Gaussification,RoschTransport}. 
From a broader perspective, these models belong to a more general class of exactly solvable systems known as \emph{non-interacting quantum systems},
also referred to as \emph{fermionic linear optics} \cite{knill2000efficient,knill2001fermionic,terhal2002classical,bravyi2004lagrangian,kitaev2006anyons,Melo13,bravyicapacity}. This class is the fermionic counterpart of the Gaussian formalism for bosons \cite{Ferra05, Weed12}, which plays a major role in quantum information and quantum optics. It includes, e.g., tight-binding models important in condensed-matter physics, certain interacting bosonic chains that can be fermionized \cite{haldane1981luttinger,vonDelft_bosonization,fradkin2013field}, and spin-1/2 systems in 2-dimensional lattices, such as the celebrated Kitaev's honeycomb model \cite{kitaev2006anyons}, which exhibits non-Abelian excitations.

Unfortunately, the exact analytical solution of a model does not imply that one can efficiently \emph{certify} the correctness of an uncharacterised experimental simulation of it. Furthermore, even if the computational complexity of the target simulation is low, the number of measurements required for its certification can be exponentially high in the lattice size without the adequate certification method. This is the case, e.g., for full state tomography (FST). Certification tools not relying on FST exist \cite{Gross10,ExperimentalCS,Toth10,cramer2010efficient,flammia2011direct,SilLanPou11,Flammia12,aolita2015reliable,steffens2015towards,Hangleiter16}, each one efficient on a different subclass of simulations. However, none of these can efficiently handle fermionic linear optics. In fact, almost all  \cite{Friedenauer08,Kim10,Islam11,Yves,barends2016digitized} the above-mentioned experiments relied on FST. The simulation of Ref.~\cite{lanyon2016efficient}, in contrast, was certified with matrix-product state tomography \cite{cramer2010efficient}. 
This is a powerful method that covers a broad class of chains but tolerates  little long-range entanglement, so that non-trivial evolutions are in practice tractable only over short times \cite{cramer2010efficient, lanyon2016efficient}. Indeed, generic spin chains out of the equilibrium \cite{eisert2006general}, or even very natural, static free-fermionic states \cite{ramirez2015entanglement,eisler2016entanglement}, involve large amounts of entanglement along the lattice.
Today, a major roadblock for further experimental progress in spin-chain simulations (and in many-body quantum technologies in general) is their certification. 

Here, we develop efficient \emph{fidelity witnesses} for all pure fermionic Gaussian target states. 
These are experimentally-friendly observables whose expectation value (on an arbitrary experimental state) yields a tight lower bound to the fidelity with the target. Hence, they allow for \emph{unconditional} certification, i.e., without any a-priori knowledge of the experimental setup or imperfections. 
We derive the witnesses in full generality in the Majorana-fermion representation, and then apply them to experimentally relevant spin-1/2 chains as examples. 
Among others, we show how to efficiently certify any \emph{sudden quench} (i.e., strongly out-of-equilibrium dynamics) in a critical TF Ising chain with nearest-neighbor interactions. 
The measurement scheme relies on a new variant of \emph{importance sampling} tailored to overlaps between covariance matrices, which is potentially interesting on its own.
As a result, the number of measurements required for the certification only has a modest scaling with the lattice size, i.e., a small \emph{sample complexity}, for which we present upper bounds.
Moreover, the method is robust against finite experimental-state infidelities, in the sense of there always existing a closed ball of valid states that are correctly accepted by the certification test. 
Finally, we provide also a totally general construction, not restricted to fermions or Gaussian states, of (possibly non-efficient) fidelity witnesses for arbitrary pure target states, which may also be useful in other scenarios.

\emph{Preliminaries}. 
Consider a system of $L$ spin-less fermionic atoms, from now on referred to as fermionic modes, with creation and annihilation operators $f_j^\dagger$ and $f_j$, respectively, for $j=1,2,\ldots,L$, satisfying the canonical anti-commutation relations $\{f_j,f_k^\dagger\}\equiv f_j  f_k^\dagger+ f_k^\dagger f_j = \delta_{j,k}$ and $\{f_j,f_k\}=\{f_j^\dagger,f_k^\dagger\}=0$, with $\delta_{j,k}$ the Kronecker symbol. 
Let us next introduce the self-adjoint Majorana mode operators
\begin{equation}
\label{def:majorana_mode_ops}
m_{2j-1}\coloneqq f_j+f_j^\dagger\, ,\qquad 
m_{2j}   \coloneqq-\i\, (f_j-f_j^\dagger),
\end{equation}
with anti-commutation relations $\{m_j,m_k\}=2\,\delta_{j,k}$. 
We say that the fermionic system is \emph{free}, \emph{Gaussian}, or \emph{linear-optical} \cite{knill2000efficient,knill2001fermionic,terhal2002classical,bravyi2004lagrangian,kitaev2006anyons,Melo13,bravyicapacity}, 
if it is governed by a quadratic Hamiltonian
\begin{equation}
\label{eq:H}
H=\frac{\i}{4}\sum_{j,k=1}^{2L} A_{j,k} \, m_j\,  m_k,
\end{equation}
where $\boldsymbol{A}=-\boldsymbol{A}^\t\in\RR^{2L\times 2L}$ is called the coupling matrix. 

The term ``free'' or ``non-interacting'' stems from the fact that $H$ is unitarily equivalent to a Hamiltonian of $L$ fermions not featuring any off-diagonal
couplings. In the bosonic realm, this is the defining property of Gaussian systems \cite{Ferra05, Weed12}, which justifies the term ``Gaussian''. In turn, what is linear about ``fermionic linear-optics'' is the time evolution of the mode operators in the Heisenberg picture,
\begin{align}
\label{eq:mode_transform}
m_j(t)\coloneqq U^\dagger(t)\, m_j\,U(t) =\sum_{k=1}^{2L}Q_{j,k}(t)\, m_k ,
\end{align}
where $U(t)\coloneqq \e^{-\i Ht}$, for $t\in\RR$, is a \emph{fermionic Gaussian unitary} and $\boldsymbol{Q}(t) \coloneqq e^{t \boldsymbol{A}}\in \SO(2L)$ its 
representation in mode space\cite{kraus2010generalized}, see Appendix~\ref{app:FLO} for a simple derivation.

Finally, it is useful to introduce, for any state $\varrho$ (Gaussian or not), the real anti-symmetric covariance matrix $\boldsymbol{\cm}(\varrho)$ 
with elements
\begin{equation}
\label{eq:def_cm}
 \cm_{j,k}(\varrho)\coloneqq\frac{\i}{2}\tr\bigl([m_j,m_k]\,\varrho\bigr) .
\end{equation}
This matrix 
contains the expectation values of 
the single-mode densities $\langle {n}_j \rangle\coloneqq \langle f^\dagger_j\, f_j\rangle$ as well as the two-mode currents $\langle f^\dagger_j\,f_k+\mathrm{h.c.} \rangle$ and pairing terms $\langle f^\dagger_j\,f^\dagger_k+\mathrm{h.c.} \rangle$. 

\begin{figure}[t!]
\centering
\includegraphics[width=.65\linewidth]{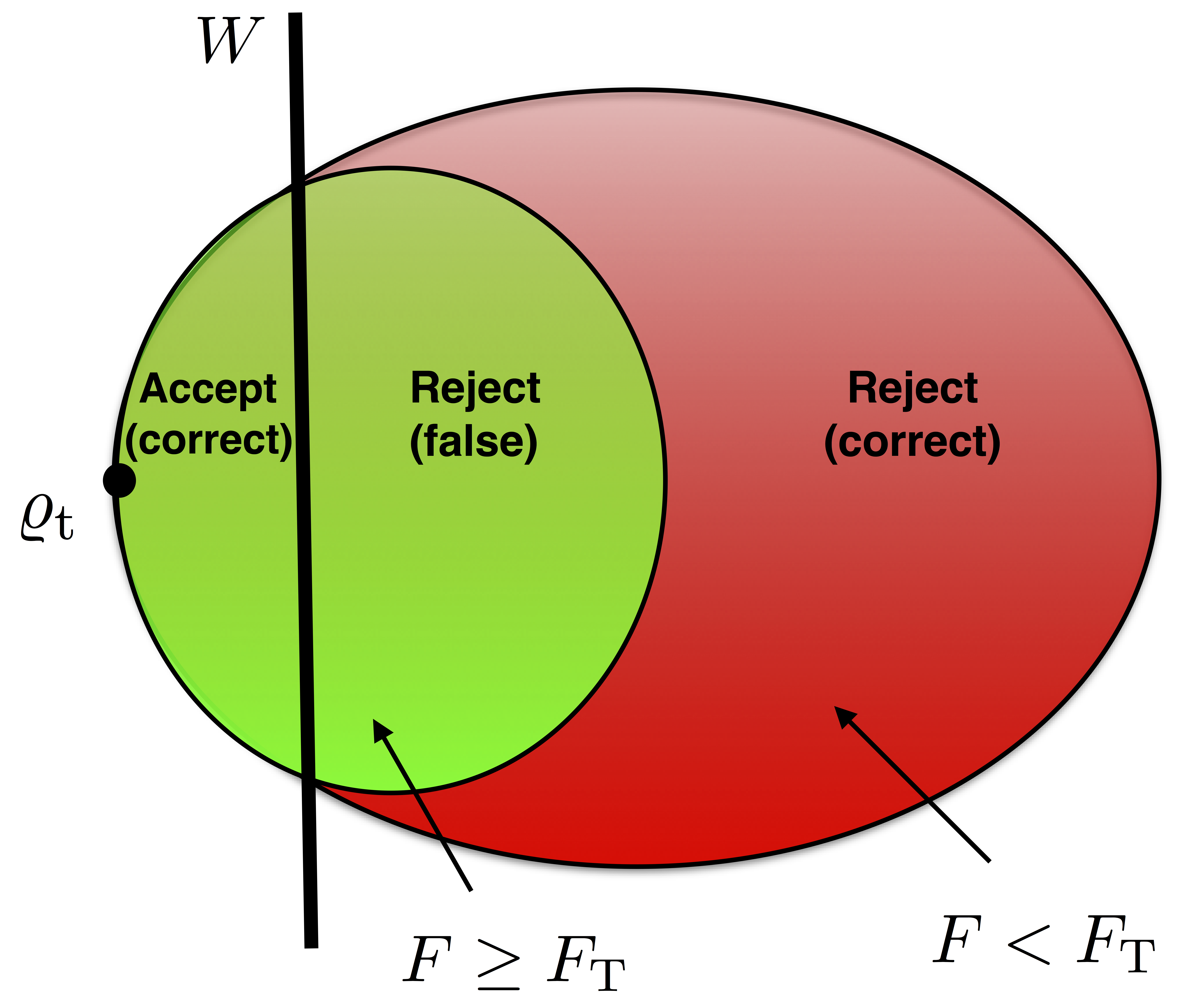}
\caption{
	\textbf{Geometrical representation of a fidelity witness}. A pure target state $\varrho_\text{t}$ lies at the boundary of state space. For any  fixed fidelity threshold $F_\text{T}$, the valid experimental states are defined by $F\geq F_\text{T}$ (green). The states with $F< F_\text{T}$ are invalid (red). A fidelity witness $\W$ defines a hyper-plane (straight  line), to the left of which only valid states are found and to the right of which both valid as well as invalid ones are found. The certification test consists of accepting all states on the left and rejecting all those on the right. Hence, a significant subset of valid states is sacrificed, as in weak-membership problems. However, in return, the experimental estimation is considerably more efficient than in schemes attempting to separate the valid from the invalid states (strong-membership problems).
}
\label{fig:certification}
\end{figure}
\emph{Fidelity witnesses}.  We consider throughout a (known) pure target state $\varrho_\text{t}$ and an arbitrary, unknown experimental preparation $\varrho_p$. 
Their closeness is measured by their \emph{fidelity}
\begin{equation}
  \label{eq:fidelity_def}
  F\coloneqq F(\varrho_\text{t},\varrho_p):=\tr\big[(\sqrt{\varrho_\text{t}}\,\varrho_p^{\dagger}\,\sqrt{\varrho_\text{t}})^{1/2}\big]^2=\tr\big[\varrho_\text{t}\,\varrho_p\big],
\end{equation}
where the last equality holds because $\varrho_\text{t}$ is pure. With this, the pivotal notion of our work can be defined:
\begin{definition}[Fidelity witnesses]
\label{def:witness}
An observable $\W$ is a \emph{fidelity witness} for $\varrho_\text{t}$ if, for  $\F(\varrho_\text{p})\coloneqq\tr[\W\, \varrho_p]$, it holds that
\begin{compactenum}[i)]
 \item \label{item:iff} $\F(\varrho_\text{p})=1$ if, and only if, $\varrho_p=\varrho_\text{t}$, and
 \item \label{item:geq} $\F(\varrho_\text{p}) \leq F$ for all states $\varrho_p$.
\end{compactenum}
\end{definition}
The term ``witness'' refers to the property that, for any fixed threshold $F_\text{T}$, finding $\F(\varrho_\text{p})\geq F_\text{T}$
witnesses that $F\geq F_\text{T}$; but if $\F(\varrho_\text{p})<F_\text{T}$ is found, then nothing can be said about $F$  (see Fig. \ref{fig:certification}). This is the least information about $\varrho_p$ needed to certify its fidelity with $\varrho_\text{t}$. The situation is reminiscent of entanglement witnesses \cite{guehne09}, which detect some entangled states and discard all non-entangled ones. The difference is that fidelity witnesses explicitly realise the extremality-based intuition of ``corralling valid states against the boundary''. 
Specific witnesses have been built for ground states of local Hamiltonians \cite{cramer2010efficient,vanderNest,Hangleiter16} and Gaussian as well as non-Gaussian output states of bosonic linear-optical circuits \cite{aolita2015reliable}. In Appendix \ref{sec:Proof_general_wti_construction}, we present (possibly non-efficient) fidelity witnesses of arbitrary target states with no assumption other than they being pure. A special case of such generic construction is the following (efficient) witnesses for the free-fermionic setting.

Any $L$-mode pure fermionic Gaussian target state $\varrho_\text{t}$ can be written as  
\begin{align}
\label{eq:taregt_state_def}
\varrho_\text{t}\coloneqq\ketbra{\psi_\text{t}}{\psi_\text{t}}\qquad\text{with}\qquad\ket{\psi_\text{t}}\coloneqq U\ket{\boldsymbol{\omega}},
\end{align}
for a fermionic Gaussian unitary $U$, as defined below Eq.\ \eqref{eq:mode_transform}, where $\boldsymbol{\omega}\coloneqq (\omega_1, \ldots,\omega_L)$ is any $L$-bit string. The ket $\ket{\boldsymbol{\omega}}$ represents the Fock-basis state vector with $\omega_j$ ($=$ 0 or 1) excitations in mode $j$, i.e., $n_j\ket{\boldsymbol{\omega}}=\omega_j \ket{\boldsymbol{\omega}}$, for $j=1, \ldots , L$, and $n_j\coloneqq f^\dagger_j\, f_j$.
It is also convenient to introduce $n^{(\boldsymbol{\omega})}\coloneqq\sum_{j=1}^L\left[ (1-\omega_j)  n_j +\omega_j(\id- n_j)\right ]$, the total fermion-number operator in the locally-flipped basis in which $\boldsymbol{\omega}$ is the is the null string, i.e. $n^{(\boldsymbol{\omega})}\ket{\boldsymbol{\omega}}=0$. In other words, $\ket{\psi_\text{t}}$ represents the so-called Fermi-sea state and the eigenstates of $n^{(\boldsymbol{\omega})}$ its excitations.
In Appendix \ref{sec:Proof_general_wti_construction}, we show that the observable
\begin{equation} 
\label{eq:def_W}
\W = U\Big(\1- n^{(\boldsymbol{\omega})}\Big) U^\dagger
\end{equation}
is a fidelity witness  for $\varrho_\text{t}$. 
Expression \eqref{eq:def_W} is the fermionic analogue of the bosonic Gausssian-state witnesses of Ref.\ \cite{aolita2015reliable}, with a crucial difference: While for bosons only the Fock-basis state vector $\ket{\boldsymbol{0}}$ is Gaussian, for fermions all $2^L$ Fock-basis vectors $\ket{\boldsymbol{\nu}}$ are Gaussian as they satisfy Wick's theorem \cite{bravyi2004lagrangian}. In fact, for mixed states, all single-mode
states are Gaussian, in sharp contrast to the bosonic case.

\emph{Measurement scheme}.  
Taking the expectation value of Eq.\ \eqref{eq:def_W} with state $\varrho_\text{p}$ yields (see Appendix~\ref{sec:evaluation})
\begin{equation}
\label{eq:def_Fgeq}
  \F(\varrho_\text{p}) =
  1 + \frac14 \tr\Bigl[\big(\boldsymbol{M}(\varrho_\text{p})-\boldsymbol{M}(\varrho_\text{t})\big)^\t \boldsymbol{M}(\varrho_\text{t}) \Bigr] \, ,
\end{equation}
where $\boldsymbol{M}(\varrho_\text{p})$ and $\boldsymbol{M}(\varrho_\text{t})$ are the covariance matrices of  $\varrho_\text{t}$ and $\varrho_\text{p}$, respectively. This expression holds also for bosonic Gaussian witnesses \cite{aolita2015reliable} and turns out very useful for the measurement of $\F(\varrho_\text{p})$. 
We call 
$\Omega\coloneqq\{ (j,k)\,:\cm_{j,k}(\varrho_\text{t})\neq0, \text{ for }1\le j<k\le 2L\}$ the set of non-zero entries of $\boldsymbol{M}(\varrho_\text{t})$. 
Then Eqs.~\eqref{eq:def_cm} and \eqref{eq:def_Fgeq} imply that if one measures on $\varrho_\text{p}$ all $|\Omega|\le2L^2+L$ observables 
$\i  [m_{j},m_{k}] /2$ with indices in $\Omega$, then one can estimate $\F(\varrho_\text{p})$. 
However, this is not the most efficient procedure (see Appendices~\ref{sec:single_shot_imp_sampling} and \ref{sec:entrywise_eval}).

A more efficient approach is to exploit \emph{importance sampling} techniques, where a subset of the $|\Omega|$ observables  is randomly selected for measurement according to its importance for $\W$. These techniques have been applied in Hilbert space to the estimation of state overlaps, where they yield efficient schemes only for a specific type of target states \cite{flammia2011direct,SilLanPou11}. 
Here we apply them in mode space to efficiently estimate overlaps between fully general covariance matrices. The starting point is to identify a random variable $X$ and an importance distribution $P\coloneqq\{P_\mu\}_\mu$, with $X$ taking the value $X_\mu$ with probability $P_\mu$, such that $\tr\bigl[\boldsymbol{M}(\varrho_\text{p})^\t\,\boldsymbol{M}(\varrho_\text{t})\bigr]$ is expressed as the mean value of $X$, i.e., 
\begin{align}
\mathbb E[X]= \sum_{\mu} P_\mu\, X_\mu = \tr\bigl[\boldsymbol{M}(\varrho_\text{p})^\t\,\boldsymbol{M}(\varrho_\text{t})\bigr]\;.
\label{eq:fundamental}
\end{align}
Then, if one can experimentally sample $X$ from $P$, $\mathbb E[X]$ can be approximated  by the finite-sample average $\mathcal X^*\coloneqq \sum_{m=1}^\mathcal{N} X_{\mu(m)}/{\mathcal{N}}$, where $X_{\mu(m)}$ is the value of $X$ at the $m$-th experimental run and $\mathcal{N}$ is the total sample size (number of runs). 
Next, we present a choice of $X$ and $P$ particularly suited to estimate $\F(\varrho_\text{p})$.

To this end, let us first define $\hat  m^{(\beta)}_{j,k}$ as the projector onto the eigenstate of the observable $\i  m_{j}m_{k}$ with eigenvalue $\beta=\pm1$, for $(j,k)\in\Omega$. Then, identifying $\mu$ with the triple $(\beta, j,k)$ and using the short-hand notation 
\begin{equation}
\label{eq:short_hand_notation}
|\boldsymbol{M}(\varrho_\text{t})| \coloneqq \sum_{(j,k) \in \Omega} |\cm_{j,k}(\varrho_\text{t})|\le 2L^2,
\end{equation}
we choose 
\begin{align}
\label{eq:choice_X}
X_{\beta, j,k}\coloneqq 2\, |\boldsymbol{M}(\varrho_\text{t})|\,\beta\,\text{sgn}\big[\cm_{j,k}(\varrho_\text{t})\big]
\end{align}
and 
\begin{align}
\label{eq:choice_P}
 P_{\beta,j,k}\coloneqq\frac{\tr\Big[\hat  m^{(\beta)}_{j,k}\,\varrho_{p}\Big]  |\cm_{j,k}(\varrho_\text{t})|}{|\boldsymbol{M}(\varrho_\text{t})|} \, .
\end{align}
This choice satisfies Eq.\ \eqref{eq:fundamental}, as explicitly shown in Appendix~\ref{sec:single_shot_imp_sampling}.
In the experiment, in turn, for each run, one chooses $(j,k)$ according to 
$P_{j,k}\coloneqq{|\cm_{j,k}(\varrho_\text{t})|}/{|\boldsymbol{M}(\varrho_\text{t})|}$ and measures $\i  m_{j}m_{k}$ on $\varrho_\text{p}$, which outputs $\beta$ with probability $P_{\beta\vert j,k}\coloneqq \tr[\hat  m^{(\beta)}_{j,k}\,\varrho_{p}]$. Substituting the obtained $(j,k)$ and $\beta$ in Eq.\ \eqref{eq:choice_X}, one samples $X_{\beta, j,k}$ with probability  $ P_{\beta,j,k}$, as desired. 
As for the experimental accessibility of the observables, for the relevant case of spin-1/2 chains each $\i m_jm_k$ corresponds to a product of Pauli matrices, as discussed below.

This single-shot importance-sampling approach  does not necessarily yield a good estimate of each individual entry of $\boldsymbol{M}(\varrho_\text{p})$, as unlikely observables according to $P_{j,k}$ are measured seldomly. The method  is specially tailored to directly obtain $\F(\varrho_\text{p})$. In fact, the resulting estimate $\mathcal X^*$ yields an excellent approximation of $\tr\bigl[\boldsymbol{M}(\varrho_\text{p})^\t\,\boldsymbol{M}(\varrho_\text{t})\bigr]$ (in a formal sense given by 
Theorem \ref{thm:main} below), with which the right-hand side of Eq.\ \eqref{eq:def_Fgeq} can be immediately evaluated. This gives our final finite-sample estimate $\F^*(\varrho_\text{p})$ of $\F(\varrho_\text{p})$.

\emph{Sample complexity}.  The scaling in $L$ of the minimum (over all estimation strategies) number $\mathcal N_{\epsilon,\delta}(\mathcal W)$ of measurement runs required to estimate $\F(\varrho_\text{p})$, up to statistical error at most $\epsilon$ and with failure probability at most $\delta$, i.e., such that 
\begin{align}
\label{eq:large_dev_bound}
  \mathbb{P} \left (|\F(\varrho_\text{p}) - \F^*(\varrho_\text{p})| \le \epsilon \right ) \ge 1- \delta\;,
  \end{align}
for all $\varrho_\text{p}$,  is called the \emph{sample complexity} \cite{Flammia12,GogKliAol13,aolita2015reliable} of estimating $\F(\varrho_\text{p})$. In Appendix~\ref{sec:single_shot_imp_sampling} we compute the number of runs required with the measurement scheme described above, which sets the following upper bound on $\mathcal N_{\epsilon,\delta}(\mathcal W)$.
\begin{theorem}[Sample complexity of $\F$]
\label{thm:main}
Let $\epsilon >0$, $\delta\in(0,1)$, $\varrho_\text{t}$ given by Eq.~\eqref{eq:taregt_state_def}, and $\W$ by Eq.~\eqref{eq:def_W}. Then 
 \begin{align}
\label{eq:unified_exp_sample_complex}
\mathcal N_{\epsilon,\delta}(\W) \leq\left\lceil\frac{\ln(2/\delta) |\boldsymbol{M}(\varrho_\text{t})| ^2}{2\,\epsilon^2}\right\rceil.
\end{align}
\end{theorem}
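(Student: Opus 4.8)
The plan is to read \eqref{eq:unified_exp_sample_complex} as a Hoeffding bound on the finite-sample estimator $\mathcal X^*$ of the measurement scheme described above, and then to propagate the resulting fluctuation through the affine relation \eqref{eq:def_Fgeq} that links $\mathcal X^*$ to $\F^*(\varrho_\text{p})$. Since each experimental run independently draws $(j,k)$ from $P_{j,k}$ and then returns $\beta$ with probability $P_{\beta\vert j,k}$, the values $X_{\mu(1)},\dots,X_{\mu(\mathcal N)}$ are i.i.d.\ copies of $X$, and by \eqref{eq:fundamental} their common mean is $\mathbb E[X]=\tr[\boldsymbol M(\varrho_\text{p})^\t\boldsymbol M(\varrho_\text{t})]$. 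As the theorem bounds the runs of an explicit strategy, it furnishes an upper bound on the minimal sample complexity $\mathcal N_{\epsilon,\delta}(\W)$. This places us squarely in the standard regime of a concentration inequality for bounded i.i.d.\ variables.

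First I would record the range of $X$. Because both $\beta$ and $\mathrm{sgn}[\cm_{j,k}(\varrho_\text{t})]$ take values in $\{-1,+1\}$, the definition \eqref{eq:choice_X} shows that $X$ is in fact supported on the two points $\pm 2\,|\boldsymbol M(\varrho_\text{t})|$, irrespective of $\varrho_\text{p}$. Hence $X$ lies in an interval of width $b-a=4\,|\boldsymbol M(\varrho_\text{t})|$, and Hoeffding's inequality gives, for every $\eta>0$,
\begin{equation}
\label{eq:hoeffding_step}
\mathbb P\bigl(|\mathcal X^*-\mathbb E[X]|\ge \eta\bigr)\le 2\exp\!\Bigl(-\frac{2\,\mathcal N\,\eta^2}{16\,|\boldsymbol M(\varrho_\text{t})|^2}\Bigr).
\end{equation}
Crucially, the width $b-a$ depends only on the known target through $|\boldsymbol M(\varrho_\text{t})|$ and not on the experimental state, so \eqref{eq:hoeffding_step} holds uniformly over all $\varrho_\text{p}$, exactly as demanded by \eqref{eq:large_dev_bound}.

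Next I would transfer this bound to the fidelity estimate. Since $\varrho_\text{t}$ is known, $\F^*(\varrho_\text{p})$ is obtained by substituting $\mathcal X^*$ for $\tr[\boldsymbol M(\varrho_\text{p})^\t\boldsymbol M(\varrho_\text{t})]$ in \eqref{eq:def_Fgeq}, while the constant $1$ and the known term $-\tfrac14\tr[\boldsymbol M(\varrho_\text{t})^\t\boldsymbol M(\varrho_\text{t})]$ stay fixed. The map from overlap to fidelity is therefore affine with slope $\tfrac14$, giving $|\F(\varrho_\text{p})-\F^*(\varrho_\text{p})|=\tfrac14\,|\mathcal X^*-\mathbb E[X]|$. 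Choosing $\eta=4\epsilon$ in \eqref{eq:hoeffding_step} then makes the event $\{|\F(\varrho_\text{p})-\F^*(\varrho_\text{p})|\le\epsilon\}$ equivalent to the complement of the tail event there; the factors of $16$ cancel and the exponent simplifies to $-2\mathcal N\epsilon^2/|\boldsymbol M(\varrho_\text{t})|^2$.

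Finally, demanding the failure probability $2\exp\bigl(-2\mathcal N\epsilon^2/|\boldsymbol M(\varrho_\text{t})|^2\bigr)$ be at most $\delta$ and solving for $\mathcal N$ yields $\mathcal N\ge \ln(2/\delta)\,|\boldsymbol M(\varrho_\text{t})|^2/(2\epsilon^2)$; taking the ceiling to enforce integrality produces \eqref{eq:unified_exp_sample_complex}. The argument is essentially routine once $X$ is recognized as two-valued, so I expect no genuine obstacle beyond two points of care: the bookkeeping of the constant $\tfrac14$ from \eqref{eq:def_Fgeq} together with the factor $2$ in the support $\pm 2\,|\boldsymbol M(\varrho_\text{t})|$ (an error in either would distort the final constant), and the unbiasedness $\mathbb E[X]=\tr[\boldsymbol M(\varrho_\text{p})^\t\boldsymbol M(\varrho_\text{t})]$ of \eqref{eq:fundamental}, which is verified separately in Appendix~\ref{sec:single_shot_imp_sampling}.
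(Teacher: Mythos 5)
Your proposal is correct and follows essentially the same route as the paper's own proof: recognize that $X$ is two-valued with support $\pm 2\,|\boldsymbol{M}(\varrho_\text{t})|$, apply Hoeffding's inequality to the i.i.d.\ sample average, convert the $4\epsilon$-deviation of the overlap estimate into an $\epsilon$-deviation of $\F^*(\varrho_\text{p})$ via the affine relation \eqref{eq:def_Fgeq}, and solve the resulting exponential bound for $\mathcal N$. The only cosmetic difference is that you invoke the unbiasedness \eqref{eq:fundamental} as given, whereas the paper re-derives it inline before applying Hoeffding; the constants and the final bound agree exactly.
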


Eq.~\eqref{eq:short_hand_notation}
implies that the right-hand side of Eq.~\eqref{eq:unified_exp_sample_complex} is never larger than 
$\left\lceil{2\ln(2/\delta) L^4}/{\epsilon^2}\right\rceil$. The scaling is thus polynomial in $L$ for all $\varrho_\text{t}$, which means that the scheme is efficient in the lattice size. Furthermore, for the physically-relevant case of $\varrho_\text{t}$ being the unique ground state of a local gapped Hamiltonian, the correlations $\tr\bigl([m_j,m_k]\,\varrho_\text{t}\bigr)$ decay exponentially with $|j-k|$ \cite{hastings2006spectral}. Then, $|\boldsymbol{M}(\varrho_\text{t})|\sim L \log(L)$, which leads to $\mathcal N_{\epsilon,\delta}(\W)\leq O\left(L^2\log^2(L) \right)$.

Finally, in Appendix~\ref{sec:entrywise_eval}, we study also a measurement scheme without importance sampling (where all $|\Omega|$ observables are measured) but exploiting the fact that all commuting observables with indices in $\Omega$ can be measured simultaneously in each run. This gives the bound $\mathcal N_{\epsilon,\delta}(\W)\leq  O\left({2\ln(2\,|\Omega|/\delta) L^4}/{\epsilon^2}\right)$, which, since $|\Omega|\le2L^2+L$, scales logarithmically worse in $L$ than in Eq.\ \eqref{eq:unified_exp_sample_complex}. We suspect that the bound in Eq.\ \eqref{eq:unified_exp_sample_complex} is close to being tight.

\emph{Spin-1/2 chains}.  
We denote a local spin operator acting at site $k$ by $\sigma^\alpha_k = \id_2^{{\otimes}(k-1)}\otimes \sigma^{\alpha}\otimes\id_2^{\otimes(L-k)}$ where $\sigma^\alpha$ for $\alpha=x,y,z$ are the Pauli matrices and $\id_2$ is the single-qubit identity.
Via the Jordan-Wigner transformation \cite{JordanWigner,LiebSchultzMattis}
\begin{align}
\label{eq:JW_transform}
\begin{split}
  {m}_{2k-1}= 
  \bigl(\prod_{j<k}\Z_j\bigr)\,\X_k,  \quad
  {m}_{2k}= 
  \bigl(\prod_{j<k}\Z_j\bigr)\,\Y_k,
  \end{split}
\end{align}
 the Hamiltonian in Eq.\ \eqref{eq:H} is equivalent \cite{Pfeuty70} to the experimentally-relevant  \cite{Friedenauer08,Kim10,Islam11,lanyon2016efficient,Yves,barends2016digitized} spin-1/2 Hamiltonian 
\begin{equation}
\label{eq:spin_H}
H_\text{spin}
=
-\sum_{k=1}^{L-1}  (J_k^x \,\X_j\,\X_{k+1}+J_k^y \,\Y_k\,\Y_{k+1})-\,\sum_{k=1}^L B_k\, \Z_k\, ,
\end{equation}
where $J_k^x,J_k^y\in\RR$ and $B_k\in\RR$ are respectively constant coupling and transverse-field strengths. 
Since these spin-1/2 chains are equivalent to free-fermionic systems for all parameter regimes, certifying quantum simulations of, e.g., adiabatic ground state preparations as well as sudden quenches amounts to certifying pure fermionic Gaussian states, as described above. 
Finally, note that Eqs. \eqref{eq:JW_transform} map each $\i m_jm_k$ to a product of Pauli matrices, as anticipated in the measurement scheme above.

\emph{Sudden quenches in critical Ising chains}.  
The 1D nearest-neighbor TF Ising Hamiltonian is given by Eq.\ \eqref{eq:spin_H} with $J_k^x= J$, $J^y= 0$, and $B_k= B$, for all $k=1, \ldots , L$, where $J,B>0$.
In a typical quench, the initial ground state $\ket{\psi(0)}\coloneqq
\ket{\uparrow}^{\otimes L}$ at a non-critical regime $J=0<B$, where $\ket{\uparrow}$ is an eigenvector of $\sigma^z$,
is evolved under the critical regime $J=B$, so as to generate a strong out-of-equilibrium evolution.
These quenches are particularly challenging to certify \cite{cramer2010efficient, lanyon2016efficient} because the time-evolved state vector $\ket{\psi(t)}$  rapidly acquires large amounts of entanglement.
Let us consider the simulation of such a quench by a \emph{digital quantum simulator}, which approximates the continuous time evolution with a Trotter-Suzuki pulse sequence $U(t)=e^{-\i\, t\,(H_B+H_J)}\approx U_T\coloneqq\left(e^{-\i\,\Delta t\,H_B}\,e^{-\i\,\Delta t\,H_J}\right)^T$, where $t=T\,\Delta t$ and $H_B$ and $H_J$ are the Ising Hamiltonians for $J=0$ and $B=0$, respectively.
The target covariance matrix is then $\boldsymbol{\cm}(\varrho_\text{t})=\boldsymbol{Q}(t)\, \boldsymbol{\cm}\big(\ketbra{\uparrow}{\uparrow}^{\otimes L}\big)\, \boldsymbol{Q}(t)^\t$, 
where $\boldsymbol{Q}(t)=e^{t\, \boldsymbol{A}(J,B)}$, with $\boldsymbol{A}(J,B)$ the coupling matrix of $H_B+H_J$, is the mode representation of the target time propagator $U(t)$ and 
\begin{align}
	\boldsymbol{\cm}\big(\ketbra{\uparrow}{\uparrow}^{\otimes L}\big)\coloneqq\oplus_{j=1}^L\begin{psmallmatrix} 
  0 & -1 \\ 1 & 0 
  \end{psmallmatrix}.
  \end{align}
In turn, the preparation's covariance matrix is given by $\boldsymbol{\cm}(\varrho_\text{p})=\boldsymbol{Q}_T\, \boldsymbol{\cm}\big(\ketbra{\uparrow}{\uparrow}^{\otimes L}\big)\, \boldsymbol{Q}_T^\t$, where $\boldsymbol{Q}_T=\left(e^{\Delta t\,\boldsymbol{A}(J)}\,e^{\Delta t\,\boldsymbol{A}(B)}\right)^T$, with $\boldsymbol{A}(J)$ ($\boldsymbol{A}(B)$) the coupling matrix of $H_B$ ($H_J$), corresponds to the discrete-time experimental evolution $U_T$,
see Fig.~\ref{fig:trotter} and Appendix~\ref{app:FLO} for technical details. 

\begin{figure}
 \includegraphics[trim = 0.85cm 0 0 0 , clip , width=.8\linewidth]{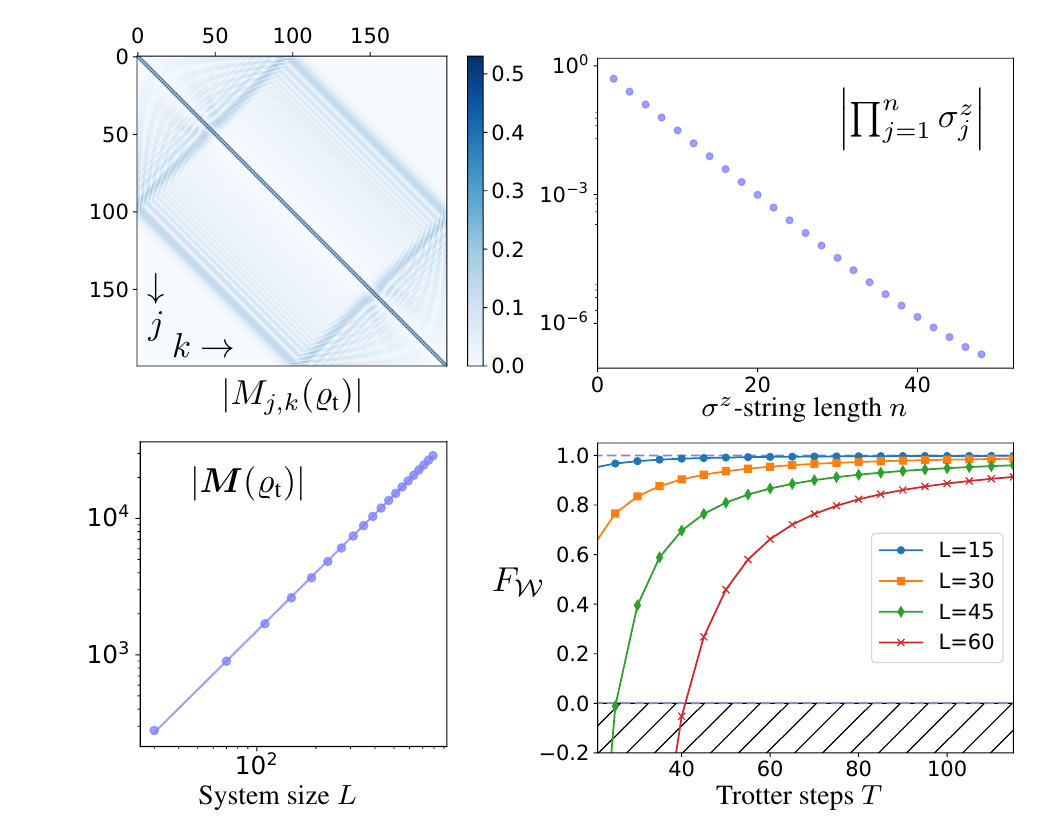}
\caption{
\textbf{Certification of a sudden quench in a critical spin-1/2 chain.} The target state vector is given by $\ket \uparrow ^{\otimes L}$ evolved, under the TF Ising Hamiltonian with $J=1=B$, up to time $t=L/8$, where $L$ is the number of spins. Top-left: Absolute values of the $4\,L^2$ entries of the target covariance matrix for $L=100$. The plot shows the correlation wavefront propagating parallel to the diagonal. At $t=L/8$, the wavefront has explored half the lattice size, in the sense that long-range correlations between spins of lattice distance $L/2$ have developed.
Top-right: An extensive amount of Pauli-matrix products (only one of which is shown) have exponentially small expectation values on the target state. This renders importance sampling in Hilbert space \cite{flammia2011direct,SilLanPou11} inapplicable. 
Importance sampling in mode space, in contrast, is efficient for estimating overlaps between arbitrary covariance matrices (see text). 
Bottom-left: $|\boldsymbol{M}(\varrho_\text{t})|$ as a function of $L$, with a power fit yielding the scaling $|\boldsymbol{M}(\varrho_\text{t})|\approx 2.11\times L^{1.42}$, so that the sample complexity is bounded by $\mathcal N_{\epsilon,\delta}(\mathcal W)\lesssim O(L^{2.84})$. 
Bottom-right: Expectation value of the fidelity witness, as a function of $L$, on a preparation of the continuos-time evolved state by a digital quantum simulator with $T$ Trotter-Suzuki pulses. 
As $L$ increases, $T$ needs to increase to keep a value of the fidelity lower bound constant.
}
\label{fig:trotter}
\end{figure}

\emph{Discussion}.  We have shown how to certify experimental states of dimension $2^{2\,L}$ with at most $O(L^4)$ measurements, with no assumption whatsoever on the experimental imperfections,  for all pure fermionic Gaussian target states. Moreover, for targets given by ground states of gapped free-fermionic Hamiltonians, the number of experimental repetitions reduces to $O\left(L^2\log^2(L) \right)$. In addition, in Appendix~\ref{sec:robustness} we prove that there always exists a closed ball of valid states that are correctly accepted by the certification test, so that the test is robust against finite experimental deviations.

Our results are directly relevant to recent experiments with spin chains \cite{BlattRoos12,Schneider12,Friedenauer08,Kim10,Islam11,Yves,barends2016digitized} as well as potential implementations of Kitaev's honeycomb model  \cite{schmied2011quantum,mielenz2015freely}.
In real-life digital simulations
\cite{Islam11,Yves,barends2016digitized}, apart from the Trotterisation errors, also heating and noise will of course be present. The fidelity witnesses offer an excellent tool for experimentally quantifying, in an an inexpensive way, the detrimental effect of such imperfections on the simulation's performance.

Free-fermionic models are classically tractable, but the importance of their quantum simulations comes from the fact that they constitute a testbed for experimental many-body quantum technologies, with certified simulations of classically intractable models as ultimate goal. In this respect, the direct-certification tools developed here may help bridge the gap between the experimental certification of proof-of-principle simulations and classically intractable ones.

\emph{Acknowledgements}.  
We thank C.\ Krumnow, D.\ Hangleiter, D.\ Gross, and Z.\ Zimboras for fruitful discussions.
The work of JE and MG was funded by the Templeton Foundation, 
the EU (AQuS), the ERC (TAQ), and the DFG (SPP 1798 CoSIP, 
EI 519/7-1, EI 519/9-1, EI 519/14-1 and
GRO 4334/2-1).
The work of MK was funded by the National Science Centre, Poland (Polonez 2015/19/P/ST2/03001) within the European Union's Horizon 2020 research and innovation programme under the Marie Skłodowska-Curie grant agreement No 665778. 
LA acknowledges financial support from the Brazilian agencies CNPq, CAPES, FAPERJ, and FAPESP.
\bibliographystyle{apsrev4-1}


%

\section*{Appendix}

In this appendix, 
we present the technicalities of the calculations mentioned in the main text and additionally provide some further details about our methods.
The first section recalls generally known facts concerning fermionic linear optics.
The next three sections concern fidelity witnesses.
The next two are on sample complexities for evaluating the Gaussian fidelity witness with an estimate.
In the last section we provide details to robustness properties of the fidelity witness and the corresponding certification test.

\subsection{Methods of fermionic linear optics}
\label{app:FLO}
This section gives more details on results of fermionic linear optics used in the main text.
The first sub-subsection discusses unitary evolution in this formalism.
The second sub-subsection contains details on the Jordan-Wigner transformation, covariance matrices of spin product states and which spin operators need to be measured to measure the fermionic covariance matrix.
Finally we shortly comment on the numerical simulations.
\paragraph{Gaussian dynamics}
The Heisenberg evolution of Majorana operators is given as follows.
\begin{lemma}[Free fermion propagator]
Let
\begin{align}
\label{eq:H(A)}
  H(\boldsymbol A) = \tfrac\i4 \sum_{j,k=1}^{2L}A_{j,k} m_jm_k\ 
\end{align}
with $\boldsymbol A=-\boldsymbol A^\t \in \mathbb R^{2L\times 2L}$. Then
\begin{align}
\label{eq:m(t)}
  m_j(t) \equiv e^{\i t H(\boldsymbol A)}m_j e^{-\i t H(\boldsymbol A)} = \sum_{k=1}^{2L}Q_{j,k}(t) m_k
\end{align}
where $\boldsymbol Q(t)=e^{t\boldsymbol A}\in SO(2L)$.
\end{lemma}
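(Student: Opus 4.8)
The plan is to avoid manipulating the operator exponentials directly and instead solve the Heisenberg equation of motion. Differentiating the definition $m_j(t) = e^{\i t H(\boldsymbol A)}\, m_j\, e^{-\i t H(\boldsymbol A)}$ in $t$ gives
\begin{equation}
\frac{d}{dt}\, m_j(t) = \i\,[H(\boldsymbol A), m_j(t)],\qquad m_j(0)=m_j .
\end{equation}
The whole argument then reduces to understanding the single commutator $[H(\boldsymbol A), m_l]$ and recognising that it keeps the evolved operator inside the linear span of the $m_k$.

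First I would compute $[H(\boldsymbol A), m_l]$ using only the anticommutation relations $\{m_j,m_k\}=2\delta_{j,k}$. The elementary identity to establish is
\begin{equation}
[m_j m_k, m_l] = 2\,\delta_{k,l}\, m_j - 2\,\delta_{j,l}\, m_k,
\end{equation}
obtained by repeatedly moving $m_l$ through the product $m_j m_k$. Substituting this into $H(\boldsymbol A)=\tfrac{\i}{4}\sum_{j,k}A_{j,k} m_j m_k$ and invoking the antisymmetry $\boldsymbol A = -\boldsymbol A^\t$ collapses the two resulting sums into one, yielding
\begin{equation}
\i\,[H(\boldsymbol A), m_l] = \sum_{k=1}^{2L} A_{l,k}\, m_k .
\end{equation}
This is the crucial structural fact: the superoperator $X \mapsto \i[H(\boldsymbol A),X]$ sends each $m_l$ back into the span of the Majorana operators, acting there exactly as the matrix $\boldsymbol A$. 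I expect the sign bookkeeping in this commutator computation---tracking the two Kronecker deltas and the cancellation enabled by antisymmetry---to be the only step where genuine care is required.

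With span invariance in hand, I would make the ansatz $m_l(t) = \sum_k Q_{l,k}(t)\, m_k$ and feed it into the equation of motion. Matching the coefficients of each $m_k$ turns the operator equation into the matrix ODE
\begin{equation}
\dot{\boldsymbol Q}(t) = \boldsymbol Q(t)\,\boldsymbol A,\qquad \boldsymbol Q(0)=\1,
\end{equation}
whose unique solution is $\boldsymbol Q(t)=e^{t\boldsymbol A}$ (using that $\boldsymbol A$ commutes with its own exponential). Finally, to confirm $\boldsymbol Q(t)\in\SO(2L)$ I would use the antisymmetry of $\boldsymbol A$ twice: orthogonality follows from $\boldsymbol Q(t)^\t = e^{t\boldsymbol A^\t} = e^{-t\boldsymbol A} = \boldsymbol Q(t)^{-1}$, and $\det \boldsymbol Q(t) = e^{t\,\tr \boldsymbol A} = 1$ since every real antisymmetric matrix is traceless. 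This pins down the special orthogonal group and completes the proof.
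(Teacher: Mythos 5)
Your proof is correct and follows essentially the same route as the paper's: derive the Heisenberg equation of motion, compute $[m_j m_k, m_l] = 2\delta_{k,l}m_j - 2\delta_{j,l}m_k$, use antisymmetry of $\boldsymbol A$ to collapse the sums, and solve the resulting linear ODE by the matrix exponential (your ODE reads $\dot{\boldsymbol Q} = \boldsymbol Q \boldsymbol A$ rather than the paper's $\dot{\boldsymbol Q} = \boldsymbol A \boldsymbol Q$, but both have the same solution $e^{t\boldsymbol A}$). As a small bonus, you explicitly verify membership in $\SO(2L)$ via $\boldsymbol Q^\t = \boldsymbol Q^{-1}$ and $\det \boldsymbol Q = e^{t \tr \boldsymbol A} = 1$, a step the paper asserts without proof.
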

Note that the propagator is manifestly real and there is no $\i$ in the exponent because $\boldsymbol A$ is antisymmetric.
\begin{proof}
We begin by noticing that $m_j(t)$ is differentiable and take a time-derivative obtaining
\begin{align}
  \partial_t m_j(t) 
  &= \i H(\boldsymbol A) m_j(t) - m_j(t) H(\boldsymbol A)\\
  &= \i [\ H(\boldsymbol A),\ m_j(t)\ ]
\end{align}
which is the Heisenberg equation of motion.
We further notice that 
\begin{align}
  \partial_t m_j(t) &=  \i e^{\i t H(\boldsymbol{A})} [\ H(\boldsymbol{A}),\ m_j\ ] e^{-\i t H(\boldsymbol{A})}
\end{align}
which means that we need to evaluate the commutator at $t=0$.
Next we calculate the commutator
\begin{align}
  [\ m_{j'}m_k,\ m_j\ ]=2m_{j'}\delta_{k,j}-2m_k\delta_{j',j}
\end{align}
which gives
\begin{align}
  [\ H(\boldsymbol A),\ m_j\ ] &= \tfrac\i4 \sum_{j',k=1}^{2L}A_{j',k} [\ m_{j'}m_k,\ m_j\ ]\\
  &= \tfrac\i2 \sum_{j',k=1}^{2L} ( A_{j',k} m_{j'}\delta_{k,j}-A_{j',k}m_{k} \delta_{j',j}) \\
  &= \tfrac\i2 \sum_{k=1}^{2L} ( A_{k,j} m_{k}-A_{j,k}m_{k}) \\
  &= - \i \sum_{k=1}^{2L}  A_{j,k}m_{k} \ .
\end{align}
This allows us to write the above Heisenberg equation of motion explicitly as
\begin{align}
   \partial_t m_j(t) = \sum_{k=1}^{2L}  A_{j,k}m_{k}\ .
\end{align}
This linear system of $2L$ ordinary differential equations is solved by
\begin{align}
  m_j(t) = \sum_{k=1}^{2L}Q_{j,k}(t) m_k\ ,
\end{align}
where $\boldsymbol Q=e^{t\boldsymbol A}\in SO(2L)$.
Indeed, this becomes apparent if one considers a vector $m = (m_1,\ldots,m_{2L})^\t$ then we get in vector notation
\begin{align}
  \partial_t\ m(t) = \boldsymbol A\ m(t)\quad \Leftrightarrow\quad  m(t) = e^{t\boldsymbol A}\ m\ .
\end{align}
\end{proof}

Given this we easily obtain the evolution equation for the covariance matrix $M(\varrho(t))_{j,k} = \frac{\i}{2}\tr\bigl([m_j,m_k]\,\varrho(t)\bigr) =  \frac{\i}{2}\tr\bigl([m_j(t),m_k(t)]\,\varrho\bigr) = \sum_{j',k'=1}^{2L} Q_{j,j'}(t)Q_{k,k'}(t)M(\varrho(0))_{j',k'}$.
This in matrix notation gives $\boldsymbol M(\varrho(t)) = (\boldsymbol Q(t) \boldsymbol M(\varrho(0))\boldsymbol Q(t)^\t)_{j,k}$.
\paragraph{Using the Jordan-Wigner transformation}
This paragraph shows how to use the Jordan-Wigner transformation to translate between spins and fermions.
We first identify covariance matrices of simple states.
\begin{lemma}[Vacuum covariance matrix]
In the notation $\sigma^z\ket \uparrow = \ket\uparrow$ we have
\begin{align}
	\boldsymbol{\cm}\big(\ketbra{\uparrow}{\uparrow}^{\otimes L}\big)\coloneqq\oplus_{j=1}^L\begin{psmallmatrix} 
  0 & -1 \\ 1 & 0 
  \end{psmallmatrix}.
\end{align}
In general if $\ket {\boldsymbol \omega}$ is a computational basis state with $\boldsymbol \omega\in\{0,1\}^{\times L}$ (identifying $\ket 0 = \ket \uparrow$ and $\ket 1 =\ket \downarrow$)
we have
\begin{align}
	\boldsymbol{\cm}\big(\ketbra{\boldsymbol\omega}{\boldsymbol \omega}\big)\coloneqq\oplus_{j=1}^L\begin{psmallmatrix} 
  0 & -(-1)^{\omega_k} \\ (-1)^{\omega_k} & 0 
  \end{psmallmatrix}.
\end{align}

\end{lemma}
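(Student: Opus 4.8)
The plan is to evaluate the entries of $\boldsymbol{\cm}(\ketbra{\boldsymbol\omega}{\boldsymbol\omega})$ directly from the definition \eqref{eq:def_cm}, using the canonical anti-commutation relations together with the Jordan-Wigner dictionary \eqref{eq:JW_transform}. Since $\ket\uparrow=\ket 0$, the all-up state is the special case $\boldsymbol\omega=\boldsymbol 0$, so it suffices to prove the general formula. First I would dispose of the trivial pieces: the diagonal entries vanish because $[m_j,m_j]=0$, and for $j\neq k$ the relation $\{m_j,m_k\}=0$ turns the commutator into $[m_j,m_k]=2m_jm_k$, so that $\cm_{j,k}(\ketbra{\boldsymbol\omega}{\boldsymbol\omega})=\i\,\sandwich{\boldsymbol\omega}{m_jm_k}{\boldsymbol\omega}$. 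The task thus reduces to computing the two-point Majorana correlators in the computational-basis state.

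The intra-mode blocks are handled by a short algebraic computation. Expanding the Majorana operators via \eqref{def:majorana_mode_ops} gives $m_{2j-1}m_{2j}=\i(\1-2n_j)$, and since $n_j\ket{\boldsymbol\omega}=\omega_j\ket{\boldsymbol\omega}$ one obtains $\cm_{2j-1,2j}=\i\cdot\i\,(1-2\omega_j)=2\omega_j-1=-(-1)^{\omega_j}$. Antisymmetry of $\boldsymbol{\cm}$ then fixes the $(2j,2j-1)$ entry to $(-1)^{\omega_j}$, reproducing exactly the $2\times2$ blocks claimed on the diagonal.

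The main obstacle is to show that every inter-mode block vanishes, i.e.\ $\sandwich{\boldsymbol\omega}{m_jm_k}{\boldsymbol\omega}=0$ whenever the two indices belong to distinct modes $a<b$. Here I would invoke the Jordan-Wigner form \eqref{eq:JW_transform}: the mode-$a$ operator contributes the non-$\Z$ Pauli ($\X_a$ or $\Y_a$) at site $a$ and the identity on all higher sites, whereas the Jordan-Wigner string of the mode-$b$ operator contributes a $\Z_a$ at site $a$; their product at site $a$ is proportional to $\Y_a$ or $\X_a$ and is therefore off-diagonal in the $\sigma^z$-eigenbasis. As $\ket{\boldsymbol\omega}$ is a product eigenstate of all the $\Z_k$, the expectation value of any tensor-product operator containing an off-diagonal single-site factor vanishes. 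This annihilates every off-diagonal block and leaves precisely the direct-sum structure asserted in the lemma; reading off $\boldsymbol\omega=\boldsymbol0$ recovers the first identity.
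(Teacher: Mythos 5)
Your proof is correct and follows essentially the same route as the paper's: reduce the all-up case to general $\boldsymbol\omega$, compute the intra-mode blocks via $m_{2j-1}m_{2j}=\i(\1-2n_j)$ (the paper uses the equivalent spin form $\sigma^z_k=-\i\,m_{2k-1}m_{2k}$ obtained from the Jordan-Wigner string cancellation), and kill every inter-mode entry because the corresponding Pauli string contains an off-diagonal factor ($\sigma^x$ or $\sigma^y$) whose expectation vanishes in a product $\sigma^z$-eigenstate. If anything, your handling of the inter-mode blocks is more explicit than the paper's, which merely notes $\langle\sigma^x_k\rangle_{\boldsymbol\omega}=\langle\sigma^y_k\rangle_{\boldsymbol\omega}=0$ and asserts that only the intra-mode entries survive.
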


\begin{proof}
The first statement follows directly from the second for $\omega_k=0$ for all $k$.

To show the latter, we first observe that $\sigma^z_k = -\i m_{2k-1}m_{2k}$.
Indeed using 
\begin{align}
\label{eq:Pauli_rules}
  \sigma^a\sigma^b = \delta_{a,b}\id_2 +\i \sum_{c=x,y,z}\varepsilon_{a,b,c} \sigma^c\ 
\end{align}
we get
\begin{align}
-\i m_{2k-1}m_{2k}
  &= - \i \bigl(\prod_{k'<k}\sigma_{k'}^z\bigr) \sigma^x_k \bigl(\prod_{k''<k}\sigma_{k''}^z\bigr) \sigma^y_k\\
  &= -\i \sigma^x_k\sigma^y_k= \sigma^z_k\ .
\end{align}
Next we observe that $\langle \sigma^x_k \rangle_{\boldsymbol{\omega}} = \langle \sigma^y_k\rangle_{\boldsymbol{\omega}} = 0$ and $\langle \sigma^z_k \rangle_{\boldsymbol{\omega}}=(-1)^{\omega_k}$ so the only non-vanishing elements are
\begin{align}
 M_{2k-1,2k} = - M_{2k,2k-1} &= \i \langle m_{2k-1}m_{2k}\rangle_{\boldsymbol{\omega}}\\
 &= -  \langle \sigma^z_k \rangle_{\boldsymbol{\omega}} = -(-1)^{\omega_k}\ .
\end{align}

\end{proof}
In an experiment based on qubits the fermionic covariance matrix can be measured by making the following Pauli measurements.
\begin{lemma}[Fermion spin correlation dictionary]
For $j<k$ we have
\begin{itemize}
\item Odd-odd
\begin{align}
  m_{2j-1}m_{2k-1}=-\i \sigma^y_j\bigl(\prod_{j<k'<k}\sigma_{k'}^z\bigr) \sigma^x_k
\end{align}
\item Odd-even
\begin{align}
  m_{2j-1}m_{2k} =  -\i \sigma^y_j\bigl(\prod_{j<k'<k}\sigma_{k'}^z\bigr) \sigma^y_k
\end{align}
\item Even-odd
\begin{align}
  m_{2j}m_{2k-1} =  \i \sigma^x_j\bigl(\prod_{j<k'<k}\sigma_{k'}^z\bigr) \sigma^x_k
\end{align}

\item Even-even
\begin{align}
  m_{2j}m_{2k} =  \i \sigma^x_j\bigl(\prod_{j<k'<k}\sigma_{k'}^z\bigr) \sigma^y_k\ .
\end{align}
\end{itemize}
\end{lemma}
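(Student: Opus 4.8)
The plan is to substitute the Jordan--Wigner transformation \eqref{eq:JW_transform} directly into each of the four Majorana products and then collapse the resulting strings of Pauli operators using the single-site algebra \eqref{eq:Pauli_rules}. All four identities follow from one and the same computation, differing only in the bookkeeping of which Pauli matrix sits at the two endpoints $j$ and $k$, so I would carry out the odd--odd case in full as a template and then read off the remaining three.

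For the template I would write
\begin{align}
m_{2j-1}m_{2k-1}
&=\Bigl(\prod_{l<j}\Z_l\Bigr)\X_j\,\Bigl(\prod_{l<k}\Z_l\Bigr)\X_k ,
\end{align}
and then split the second string, using $j<k$, as $\prod_{l<k}\Z_l=\bigl(\prod_{l<j}\Z_l\bigr)\Z_j\bigl(\prod_{j<l<k}\Z_l\bigr)$. The key step is that the factors on sites $l<j$ commute with $\X_j$ and with one another, so the two copies of $\prod_{l<j}\Z_l$ meet and cancel via $(\Z_l)^2=\id_2$. What survives is the endpoint operator at site $j$, the intermediate string $\prod_{j<l<k}\Z_l$, and the trailing $\X_k$; there is no $\Z_k$ to absorb, since the lower string of $m_{2k-1}$ stops below $k$.

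The single remaining product at site $j$ is then resolved by \eqref{eq:Pauli_rules}: here $\X_j\Z_j=-\i\,\Y_j$, giving exactly $-\i\,\Y_j\bigl(\prod_{j<l<k}\Z_l\bigr)\X_k$. The other three cases are obtained by swapping endpoints: an odd left index contributes $\X_j\Z_j=-\i\,\Y_j$ and an even one contributes $\Y_j\Z_j=\i\,\X_j$, while an odd right index leaves $\X_k$ and an even one leaves $\Y_k$. Multiplying out the four endpoint combinations reproduces all of the stated formulas, including their overall signs.

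I do not expect a genuine obstacle: the content is entirely the cancellation of the lower Jordan--Wigner ``tail'' and the correct tracking of the $\pm\i$ factors from the Levi--Civita symbol $\varepsilon_{a,b,c}$ in \eqref{eq:Pauli_rules}. The only point demanding care is the bookkeeping of the strings -- in particular verifying that the tails below site $j$ cancel while the segment between $j$ and $k$ is left intact -- after which the four cases are mechanical.
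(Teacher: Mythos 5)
Your proposal is correct and follows essentially the same route as the paper's proof: substitute the Jordan--Wigner strings, cancel the tails below site $j$, and resolve the leftover product $\X_j\Z_j=-\i\,\Y_j$ (resp.\ $\Y_j\Z_j=\i\,\X_j$) via the Pauli algebra \eqref{eq:Pauli_rules}, treating the odd--odd case as the template and reading off the other three. The paper does exactly this, working out the odd--odd case and stating that the remaining relations follow similarly.
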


\begin{proof}

\begin{align}
  m_{2j-1}m_{2k-1} &= \bigl(\prod_{j'<j}\sigma_{j'}^z\bigr) \sigma^x_j \bigl(\prod_{k'<k}\sigma_{k'}^z\bigr) \sigma^x_k\\
  &= \sigma^x_j \sigma^z_j\bigl(\prod_{j<k'<k}\sigma_{k'}^z\bigr) \sigma^x_k\\
  &= -\i \sigma^y_j\bigl(\prod_{j<k'<k}\sigma_{k'}^z\bigr) \sigma^x_k\;.
\end{align}
The remaining relations follow similarly and by again using \eqref{eq:Pauli_rules}.
\end{proof}
Considering the reversed direction of this dictionary, we find that the product of two spin operators is a product of again two Majorana operators only when the spins are neighboring in the Jordan-Wigner transformation from which we obtain the following corollary.
\begin{corollary}[XY models]
The Hamiltonian $H_\mathrm{spin}$ from main text maps to a quadratic fermionic Hamiltonian $H(\boldsymbol A)$ under the Jordan-Wigner transformation.
\end{corollary}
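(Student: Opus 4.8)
The plan is to invert the Fermion spin correlation dictionary at neighbouring sites, substitute the resulting Majorana bilinears into $H_\mathrm{spin}$, and read off a real antisymmetric coupling matrix $\boldsymbol A$.

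First, I would specialise the dictionary lemma to the case $k=j+1$, in which the Jordan-Wigner string $\prod_{j<k'<k}\Z_{k'}$ is empty. The even-odd relation then reads $\X_k\X_{k+1}=-\i\,m_{2k}m_{2k+1}$, the odd-even relation gives $\Y_k\Y_{k+1}=\i\,m_{2k-1}m_{2k+2}$, and the identity $\Z_k=-\i\,m_{2k-1}m_{2k}$ from the vacuum-covariance-matrix lemma handles the transverse field. These three bilinears are exactly the spin terms occurring in $H_\mathrm{spin}$, and the decisive point is that each is a product of precisely two distinct Majorana operators: it is the nearest-neighbour restriction that makes the string cancel and leaves a genuinely quadratic object.

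Second, substituting these identities into
\begin{equation}
H_\mathrm{spin}=-\sum_{k=1}^{L-1}\bigl(J_k^x\,\X_k\X_{k+1}+J_k^y\,\Y_k\Y_{k+1}\bigr)-\sum_{k=1}^L B_k\,\Z_k
\end{equation}
yields, after tracking the signs,
\begin{equation}
H_\mathrm{spin}=\sum_{k=1}^{L-1}\i\bigl(J_k^x\,m_{2k}m_{2k+1}-J_k^y\,m_{2k-1}m_{2k+2}\bigr)+\sum_{k=1}^L \i\,B_k\,m_{2k-1}m_{2k}.
\end{equation}
Every summand has the form $\i\,c\,m_am_b$ with real $c$ and distinct indices $a\neq b$, i.e.\ it is a Majorana bilinear.

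Third, I would recast this into the canonical shape $H(\boldsymbol A)=\tfrac\i4\sum_{j,k}A_{j,k}m_jm_k$. Since $A_{j,j}=0$, $m_jm_j=1$, and $m_am_b=-m_bm_a$, for any antisymmetric $\boldsymbol A$ one has $\tfrac\i4\sum_{j,k}A_{j,k}m_jm_k=\tfrac\i2\sum_{a<b}A_{a,b}m_am_b$, so it suffices to match coefficients: a term $\i\,c\,m_am_b$ with $a<b$ is reproduced by $A_{a,b}=2c$, $A_{b,a}=-2c$. Concretely this gives $A_{2k-1,2k}=2B_k$, $A_{2k,2k+1}=2J_k^x$, $A_{2k-1,2k+2}=-2J_k^y$ together with their antisymmetric partners and zero otherwise. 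All entries are real, hence $\boldsymbol A=-\boldsymbol A^\t\in\RR^{2L\times 2L}$ and $H_\mathrm{spin}=H(\boldsymbol A)$, as claimed. There is no serious obstacle here beyond the bookkeeping of factors of $\i$ and signs under antisymmetrisation; the only conceptual point worth stressing is that for $\X\X$ or $\Y\Y$ couplings at distance larger than one the string $\prod_{j<k'<k}\Z_{k'}$ would survive, producing products of more than two Majorana operators, so the free-fermionic (quadratic) structure is a genuine consequence of the locality of $H_\mathrm{spin}$.
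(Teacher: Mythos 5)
Your proof is correct and follows essentially the same route as the paper: invert the fermion--spin dictionary at nearest-neighbour sites, where the Jordan--Wigner string is empty, and read off a real antisymmetric coupling matrix by matching Majorana bilinears against the canonical form $\tfrac{\i}{4}\sum_{j,k}A_{j,k}m_jm_k$. The paper leaves this as a one-line remark after the dictionary lemma and writes out $\boldsymbol{A}$ explicitly only for the transverse-field Ising special case, whereas you also carry out the bookkeeping for the $\Y\Y$ term (yielding $A_{2k-1,2k+2}=-2J_k^y$), which correctly reduces to the paper's TFIM coupling matrix when $J^y=0$.
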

The translation invariant case is  physically the most relevant case for which  the following  result first appeared in \cite{Pfeuty70} and we state it to make explicit which couplings we have used in our simulations.
\begin{lemma}[Transverse field Ising model]
The Hamiltonian of the transverse field Ising model
\begin{equation}
H_\mathrm{TFIM}
=
-J\,\sum_{k=1}^{L-1} \X_k\,\X_{k+1}- B\,\sum_{k=1}^L  \Z_k\, 
\end{equation}
maps to free fermions under the Jordan-Wigner transformation and the couplings matrix read
\begin{align}
\boldsymbol{A}(J,B) = 2\left( \begin{array}{cccccc}
0&B&&&&\\
-B&0&J&&&\\
&-J&0&B&&\\
&&-B&0&J&\\
&&&-J&0&\\
&&&&&\ddots \end{array} \right) .
\end{align}
\end{lemma}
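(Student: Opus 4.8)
The plan is to rewrite each term of $H_\mathrm{TFIM}$ as a Majorana bilinear, assemble the result into the canonical quadratic form $H(\boldsymbol A)=\tfrac\i4\sum_{j,k}A_{j,k}m_jm_k$ of Eq.~\eqref{eq:H(A)}, and then read off $\boldsymbol A$ using its antisymmetry. First I would dispose of the transverse-field term. The identity $\Z_k=-\i\,m_{2k-1}m_{2k}$, already established in the vacuum-covariance-matrix lemma, turns each on-site term into $-B\,\Z_k=\i B\,m_{2k-1}m_{2k}$; thus the field couples the two Majoranas living on the \emph{same} physical site.

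Next I would treat the nearest-neighbor $\X\X$ coupling. Specializing the even--odd entry of the fermion--spin dictionary to adjacent sites, the Jordan--Wigner string between sites $k$ and $k+1$ is empty, so $m_{2k}m_{2k+1}=\i\,\X_k\X_{k+1}$, equivalently $-J\,\X_k\X_{k+1}=\i J\,m_{2k}m_{2k+1}$. Hence the Ising term links the \emph{even} Majorana of site $k$ to the \emph{odd} Majorana of site $k+1$. Substituting both identities yields $H_\mathrm{TFIM}=\i B\sum_{k=1}^{L}m_{2k-1}m_{2k}+\i J\sum_{k=1}^{L-1}m_{2k}m_{2k+1}$, which is manifestly quadratic in the Majoranas and thereby proves the first assertion (a special case of the preceding XY-models corollary).

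Finally I would extract $\boldsymbol A$. Using the anticommutation $m_am_b=-m_bm_a$ for $a\neq b$ together with the imposed antisymmetry $\boldsymbol A=-\boldsymbol A^\t$, the canonical form collapses to $H(\boldsymbol A)=\tfrac\i2\sum_{j<k}A_{j,k}m_jm_k$. Matching coefficients term by term then forces $A_{2k-1,2k}=2B$ and $A_{2k,2k+1}=2J$, with all other upper-triangular entries vanishing; antisymmetry fixes the lower triangle. This reproduces exactly the stated tridiagonal matrix, with overall prefactor $2$ and the alternating $B,J$ pattern along the off-diagonals.

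The computation is essentially bookkeeping, and the one genuinely substantive point is the \emph{pairing pattern}: that the $\X\X$ coupling is carried by $m_{2k}m_{2k+1}$ rather than, say, $m_{2k-1}m_{2k+1}$, which is precisely where the Jordan--Wigner strings must cancel. Keeping the alternating $B,J$ structure correct against the even/odd Majorana indexing, and pinning down the global factor of $2$ in the matching, are the main places to be careful.
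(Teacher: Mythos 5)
Your proposal is correct and follows essentially the same route as the paper: both use the fermion--spin dictionary identities $\Z_k=-\i\,m_{2k-1}m_{2k}$ and $\X_k\X_{k+1}=-\i\,m_{2k}m_{2k+1}$ (adjacent sites, empty Jordan--Wigner string) to rewrite $H_\mathrm{TFIM}$ as $\i B\sum_k m_{2k-1}m_{2k}+\i J\sum_k m_{2k}m_{2k+1}$ and then match against the canonical form $H(\boldsymbol A)=\tfrac{\i}{4}\sum_{j,k}A_{j,k}m_jm_k$. Your only addition is to spell out the factor-of-two bookkeeping (reducing the double sum to $\tfrac{\i}{2}\sum_{j<k}A_{j,k}m_jm_k$ via antisymmetry and anticommutation), which the paper leaves implicit in "can be put to the standard form."
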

Note, that for compactness we write in the main text $\boldsymbol A(J) \equiv \boldsymbol A (J,0)$ and $\boldsymbol A(B) \equiv \boldsymbol A (0,B)$.
\begin{proof}
By the above dictionary lemma we have $\sigma^x_{k}\sigma^x_{k+1}=-\i m_{2k}m_{2k+1}$ and $\sigma^z_k= -\i m_{2k-1}m_{2k}$.
This gives
$H_\mathrm{TFIM} =  \i \sum_{k=1}^{L-1}J  m_{2k}m_{2k+1} + \i \sum_{k=1}^L B  m_{2k-1}m_{2k}$
which can be put to the standard form $H(\boldsymbol A) = \tfrac\i 4 \sum_{j,k}^{2L} A_{j,k} m_j m_k$ by defining the matrix $\boldsymbol A$ as in the lemma statement.
\end{proof}
\paragraph{Comments on numerics}
The numerical code used to obtain Fig. 2 in main text is available at \cite{github}.
We use Wick's formula \cite{bravyi2004lagrangian} to calculate $|\langle \prod_{k=1}^n\sigma^z_k\rangle| = \mathrm{Pf}( \boldsymbol M_{1\ldots 2n})$ where $\mathrm{Pf}$ denotes the Pffafian which can be calculated using the package PFAPACK \cite{PFAPACK}. 

\subsection{Proof that Eq.\ \eqref{eq:def_W} yields a fidelity witness and general witness construction}
\label{sec:Proof_general_wti_construction}
Here we first provide an expression for a fidelity witness of any arbitrary, totally generic pure target state, not restricted to the Gaussian fermionic setting. 
\begin{proposition}[General witness construction]
\label{prop:witness_construction}
Let $\varrho_\text{t}$ be any pure target state, $0<\Delta=\lambda_1\le\ldots\le \lambda_N$, and $P_1$, $P_2$, $\hdots$, and $P_N$ positive-semidefinite operators such that $\varrho_\text{t}+ \sum_{l=1}^N  P_l=\id$ and  $\tr(\varrho_\text{t}\,  P_l)=0$ for all $l=1, \ldots , N$. 
Then,
\begin{align}
\W\coloneqq\id -\Delta^{-1}\sum_{l=1}^N \lambda_l \, P_l
\end{align}
is a fidelity witness for $\varrho_\text{t}$.
\end{proposition}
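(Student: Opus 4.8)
The plan is to verify the two defining conditions of a fidelity witness directly, by expressing both $\F(\varrho_\text{p})=\tr[\W\varrho_\text{p}]$ and the fidelity $F=\tr[\varrho_\text{t}\varrho_\text{p}]$ in terms of the numbers $p_l\coloneqq\tr[P_l\,\varrho_\text{p}]$. Positivity of each $P_l$ together with $\varrho_\text{p}\ge 0$ gives $p_l\ge 0$, and taking the trace against $\varrho_\text{p}$ of the resolution $\varrho_\text{t}+\sum_l P_l=\id$ (and using $\tr[\varrho_\text{p}]=1$) yields the single identity $F=1-\sum_{l}p_l$. On the witness side, linearity gives immediately $\F(\varrho_\text{p})=1-\Delta^{-1}\sum_l\lambda_l\,p_l$. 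These two expressions are the whole engine of the argument.

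For the upper-bound property~(\ref{item:geq}), I would subtract the two expressions to obtain
\begin{align}
\F(\varrho_\text{p})-F=\sum_{l=1}^N\Bigl(1-\tfrac{\lambda_l}{\Delta}\Bigr)p_l.
\end{align}
Since $\Delta=\lambda_1\le\lambda_l$ for every $l$, each coefficient satisfies $1-\lambda_l/\Delta\le 0$, and each $p_l\ge 0$, so every summand is nonpositive and hence $\F(\varrho_\text{p})\le F$ for all states $\varrho_\text{p}$. This is precisely the step where the ordering hypothesis $\Delta=\lambda_1\le\dots\le\lambda_N$ is exploited; it is what fixes the sign of the coefficients.

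For the iff-property~(\ref{item:iff}), observe that $\F(\varrho_\text{p})=1$ is equivalent to $\sum_l\lambda_l\,p_l=0$. Because $\lambda_l\ge\Delta>0$ and $p_l\ge 0$, this forces $p_l=0$ for all $l$, whence $F=1-\sum_l p_l=1$. Writing a spectral decomposition $\varrho_\text{p}=\sum_i q_i\ketbra{\phi_i}{\phi_i}$, the condition $\sum_i q_i\,|\braket{\psi_\text{t}}{\phi_i}|^2=\tr[\varrho_\text{t}\varrho_\text{p}]=1=\sum_i q_i$ forces $\ket{\phi_i}=\ket{\psi_\text{t}}$ (up to a phase) for every $i$ with $q_i>0$, i.e.\ $\varrho_\text{p}=\varrho_\text{t}$. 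Conversely, if $\varrho_\text{p}=\varrho_\text{t}$ then $p_l=\tr(\varrho_\text{t}\,P_l)=0$ by hypothesis, and hence $\F(\varrho_\text{p})=1$.

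There is no serious obstacle here: the statement is a direct verification, and the only mildly nontrivial point is the last implication, where purity of $\varrho_\text{t}$ is needed to upgrade the scalar equality $\tr[\varrho_\text{t}\varrho_\text{p}]=1$ to the operator identity $\varrho_\text{p}=\varrho_\text{t}$ rather than merely full overlap. One must also remember to use normalization $\tr[\varrho_\text{p}]=1$ when tracing the resolution of the identity, since the entire bookkeeping $F=1-\sum_l p_l$ rests on it; and one should note that $\lambda_l>0$ (guaranteed by $0<\Delta\le\lambda_l$) is what makes $\sum_l\lambda_l p_l=0$ equivalent to the vanishing of every $p_l$.
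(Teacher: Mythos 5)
Your proof is correct and follows essentially the same route as the paper's: both rest on the positivity of $\tr[P_l\,\varrho_\text{p}]$, the ordering $\lambda_l\ge\Delta$, the resolution $\varrho_\text{t}+\sum_{l}P_l=\id$, and the purity of $\varrho_\text{t}$ to upgrade the scalar equality $\tr[\varrho_\text{t}\,\varrho_\text{p}]=1$ to the operator identity $\varrho_\text{p}=\varrho_\text{t}$. The only cosmetic difference is that the paper first establishes the operator inequality $\varrho_\text{t}\succeq\W$ and then traces it against an eigendecomposition of $\varrho_\text{p}$, whereas you carry out the same estimate directly at the level of the scalars $p_l=\tr[P_l\,\varrho_\text{p}]$.
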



The fact that the observable $\W$ in Eq.\ \eqref{eq:def_W} defines a fidelity witness for the free-fermionic target state in Eq.\ \eqref{eq:taregt_state_def} follows from Proposition~\ref{prop:witness_construction} taking $N=2^L-1$, identifying $l$ with an $L$-bit string $\boldsymbol{\nu}\neq\boldsymbol{\omega}$, and taking
$\lambda_{\boldsymbol{\nu}}=\sum_{j=1}^L\left[ (1-\omega_j)  \nu_j +\omega_j(1- \nu_j)\right]$ and $P_{\boldsymbol{\nu}}=U\ket{\boldsymbol{\nu}}\bra{\boldsymbol{\nu}}U^\dagger$. 

\begin{proof}[Proof of Proposition \ref{prop:witness_construction}]
We start with Property \ref{item:iff} in Def.\ \ref{def:witness}.
Let $\varrho_p$ be such that $\tr[\W \varrho_p]=1$.
Then $\Delta^{-1}\sum_{k=1}^N \lambda_l \tr[ P_k\varrho_p]=0$.
As all terms are non-negative, we have $\tr[ P_k\varrho_p]=0$.
From this we write $1=\tr[\varrho_p \id]=\tr[\varrho_p \varrho_\text{t}]+\sum_{k=1}^N \tr[ P_k\varrho_p]=\tr[\varrho_p \varrho_\text{t}]$, which means, since $\varrho_\text{t}$ is pure, that $\varrho_p=\varrho_\text{t}$.
The converse direction starting from $\varrho_p=\varrho_\text{t}$ follows from $\tr[\varrho_\text{t}  P_k]=0$.

We now prove Property \ref{item:geq} in Def.\ \ref{def:witness}.
For any state vector $\ket \psi$ we have
\begin{align}
  \sum_{k=1}^N \lambda_l \langle \psi| {P}_k| \psi \rangle
    &\geq 
  \Delta \sum_{k=1}^N  \langle \psi| {P}_k| \psi \rangle
  \\
    &= \Delta (1-\langle \psi|\varrho_\text{t}| \psi \rangle)
  \, .
\end{align}
This means that 
\begin{align}
  \langle \psi| \varrho_\text{t}| \psi \rangle\geq  \langle \psi| \W| \psi \rangle
\end{align}
which one may write
$\varrho_\text{t}\succeq \W=\id -\Delta^{-1} \sum_{k=1}^N \lambda_l  P_l$, where $\succeq$ denotes semidefinite dominance.
This relation can be used in order to lower bound the fidelity.
If we write the preparation state in its eigenbasis $\varrho_{p}=\sum_k p_k |k\rangle\langle k|\succeq 0$, then we find the following
\begin{align}
\tr[(\varrho_\text{t} -\W)\varrho_{p}]=\sum_k p_k \langle k|\varrho_\text{t} -\W|k\rangle\ge 0\;.
\end{align}
Thus we arrive at
\begin{align}
F=\tr[\varrho_\text{t}\, \varrho_{p}]\ge\tr[\W\, \varrho_{p}]\;.
\end{align}
\end{proof}

\subsection{Proof of Eq.\ \texorpdfstring{\eqref{eq:def_Fgeq}}{(9)}: Fidelity-witness in terms of covariance matrices}
\label{sec:evaluation}
Before the proof, let us first provide useful facts from fermionic linear optics theory.
The covariance matrix of any Fock state vector $\ket{\boldsymbol{\omega}}$ is given, introducing the short-hand notation $\boldsymbol{M}_{\boldsymbol{\omega}}\coloneqq\boldsymbol{M}(\ketbra{\boldsymbol{\omega}}{\boldsymbol{\omega}})$ by
\begin{align}
\label{eq:def_Jw}
  \boldsymbol{M}_{\boldsymbol{\omega}}=\bigoplus_{k=1}^L (1-2w_k)
  \begin{pmatrix} 
  0 & -1 \\ 1 & 0 
  \end{pmatrix} \;. 
\end{align}
This is readily seen from the fact that $\i [m_{2k-1},m_{2k}]/2=(f_k+f_k^\dagger)(f_k-f_k^\dagger)=2  n_k-\id$ which gives $M_{2k-1,2k}=\i\bra{\boldsymbol \omega}  m_{2k-1}m_{2k}\ket{\boldsymbol \omega}=2w_k-1=-M_{2k,2k-1}$ and that all other covariance matrix entries are zero.
Put differently, fermionic Fock states are of the most simple product form.
In order to introduce coherences in the system one can rotate the state by a Gaussian unitary $U$ with mode action $\boldsymbol{Q}$ which then yields
\begin{align}
   \boldsymbol{M}(U\ketbra{\boldsymbol{\omega}}{\boldsymbol{\omega}}U^\dagger)= \boldsymbol{Q} \, \boldsymbol{M}_{\boldsymbol{\omega}}\, \boldsymbol{Q}^\t  .
\end{align}

\begin{proof}[Proof of Eq.\ \eqref{eq:def_Fgeq}]
  In order to evaluate the witness we notice that the numbering operator of mode $k$ is 
   \begin{align}
  n_k=\frac{\id}{2}+\frac{\i}{4}[m_{2k-1},m_{2k}]
   \end{align}
    and    \begin{align}\id -n_k=\frac{\id}{2}-\frac{\i}{4}[m_{2k-1},m_{2k}].   \end{align} 
 This allows us to write the projector $n^{(\boldsymbol{\omega})}$ as 
 \begin{align}
 n^{(\boldsymbol{\omega})}=\sum_{j=1}^L\left[\id/2+ \frac{\i}{4} (1-2\omega_j) [m_{2k-1},m_{2k}])\right ]
\,.
 \end{align}
  We therefore have 
\begin{align}
  \tr(\varrho_p \W)   &=1-\frac{L}{2}
   - \frac{\i}{4}
  \sum_{k=1}^L (1-2w_k)\tr\bigl(U^\dagger\varrho_pU\bigl[ m_{2k-1} , m_{2k} \bigr] \bigr) 
     \\
     &=
     1-\frac{L}{2} - \frac{1}{2}
     \sum_{k=1}^L(1-2w_k)M(U^\dagger\, \varrho_p\, U)_{2k-1,2k} \, ,
\end{align}
where the definition of the covariance matrix \eqref{eq:def_cm} has been used. 
As $\tilde{\boldsymbol{M}} \coloneqq \boldsymbol{M}(U^\dagger\, \varrho_p\, U)=\boldsymbol{Q}^\t \, \boldsymbol{M}(\varrho_\text{p})\, \boldsymbol{Q}$ is anti-symmetric, we can write 
$\tilde\cm_{2k-1,2k}$ as
\begin{equation}
  \tilde \cm _{2k-1,2k} 
  =
  \frac12\tr\Bigl[ 
  \begin{pmatrix}
    0&\tilde \cm _{2k-1,2k}\\
    \tilde \cm _{2k,2k-1} &0
  \end{pmatrix}
  \begin{pmatrix} 
  0 & -1 \\ 1 & 0 
  \end{pmatrix}
  \Bigr].
\end{equation}
We further notice that\ \eqref{eq:def_Jw} allows us to write
\begin{align}
\sum_{k=1}^L(1-2w_k)\tilde M_{2k-1,2k}=\frac12 \tr[\tilde{\boldsymbol{M}}\boldsymbol{M}_{\boldsymbol{\omega}}]\,.
\end{align}
From the definition of $\boldsymbol{M}(\varrho_\text{t})=\boldsymbol{Q} \, \boldsymbol{M}_{\boldsymbol{\omega}}\, \boldsymbol{Q}^\t$ we finally obtain 
\begin{align}
\tr(\varrho_p \W) 
&=
1-\frac{L}{2} - \frac14 \tr[\boldsymbol{M}(\varrho_\text{p})\,\boldsymbol{M}(\varrho_\text{t})]\\
&=1+\frac14 \tr\bigl[(\boldsymbol{M}(\varrho_\text{p})-\boldsymbol{M}(\varrho_\text{t}))^\t \boldsymbol{M}(\varrho_\text{t}) \bigr]\,.
\end{align}

\end{proof}

\subsection{Proof of Theorem~3 (Sample complexity of \texorpdfstring{$\F$}{F})}
\label{sec:single_shot_imp_sampling}

In this section, we compute the number of experimental runs required to get a finite-sample estimate $\F^*(\varrho_\text{p})$ of $\F(\varrho_\text{p})$ satisfying Eq.\ \eqref{eq:large_dev_bound} 
with the measurement scheme with single-shot importance sampling described in the main text. This sets the upper bound on $\mathcal N_{\epsilon,\delta}(\mathcal W)$ in Eq.\ \eqref{eq:unified_exp_sample_complex}, proving Theorem \ref{thm:main}.
\begin{proof}[Proof of Theorem~\ref{thm:main}]
We begin by noting that one can directly evaluate $\F(\varrho_\text{p})$ from the value of 
\begin{align}
\mathcal X\coloneqq\tr\bigl[\boldsymbol{M}(\varrho_\text{p})^\t\,\boldsymbol{M}(\varrho_\text{t})\bigr]=4(F_\W+\frac L 2 -1).
\end{align}
Indeed, if $|\mathcal X^*-\mathcal X|\le 4\epsilon$, then $|\F^*(\varrho_\text{p})-\F(\varrho_\text{p})|\le \epsilon$.
We define conditional probability
\begin{align}
P_{\beta\vert j,k}\coloneqq \tr\Big[\hat  m^{(\beta)}_{j,k}\,\varrho_{p}\Big]
\end{align}
and  the sampling distribution 
\begin{align}
P_{j,k}=\frac{|\cm_{j,k}(\varrho_\text{t})|}{|\boldsymbol{M}(\varrho_\text{t})|}
\end{align}
for $(j,k) \in \Omega$ with $|\boldsymbol{M}(\varrho_\text{t})| = \sum_{(j,k) \in \Omega} |\cm_{j,k}(\varrho_\text{t})|\le 2L^2$.
  By Bayes' theorem, we have that $P_{\beta,j,k}=P_{\beta\vert j,k} P_{j,k}$ is a well-defined probability distribution.
Additionally, we define the importance sampling variable 
\begin{align}
X_{\beta, j,k}\coloneqq 2\, |\boldsymbol{M}(\varrho_\text{t})|\,\beta\,\text{sgn}\big[\cm_{j,k}(\varrho_\text{t})\big]
\end{align}
which is distributed over $P_{\beta,j,k}$.
With these definitions we check that the average of $X$ gives $\mathcal X$
\begin{align}
\label{eq:witness_non_const1}
\mathbb E[ X]
&=\sum_{(j,k)\in \Omega,\beta=\pm1} X_{\beta, j,k}P_{\beta,j,k}\\
  &= 2\sum_{(j,k)\in\Omega} \text{sgn}\big[\cm_{j,k}(\varrho_\text{t})\big]
 |\cm_{j,k}(\varrho_\text{t})|\sum_{\beta=\pm1} \beta \tr[ \hat  m^{(\beta)}_{j,k}\,\varrho^{(p)}]   
\label{eq:bernstein_eval}\\
&=\tr\bigl[\boldsymbol{M}(\varrho_\text{p})^\t\,\boldsymbol{M}(\varrho_\text{t})\bigr]\;.
\end{align} 
We now use Hoeffding's inequality to see that this results in a $(\epsilon,\delta)$-evaluation promise.
We have 
\begin{align}
  \mathbb{P} \left [|\mathcal X -\frac{1}{\mathcal{N}}\sum_{m=1}^\mathcal{N} X_{\mu(m)}| > 4\epsilon\right] \leq 2\exp\left({-\frac{2\,\mathcal{N}\,\epsilon^2}  {|\boldsymbol{M}(\varrho_\text{t})|^2}}\right).
\end{align}
We would like the RHS to be upper bounded by $\delta$ so we obtain
\begin{align}
\mathcal N_{\epsilon,\delta}(\W)=\left\lceil\frac{\ln(2/\delta) |\boldsymbol{M}(\varrho_\text{t})| ^2}{2\,\epsilon^2}\right\rceil
  \label{eq:final_complexity}
  \end{align}
  which is the sample complexity, i.e., yielding the inequality~\eqref{eq:unified_exp_sample_complex}.

\end{proof}

\subsection{Sample complexity for entrywise evaluation}
\label{sec:entrywise_eval}
Here, we compute the number of experimental runs required to get a finite-sample estimate $\F^*(\varrho_\text{p})$ of $\F(\varrho_\text{p})$ satisfying Eq.\ \eqref{eq:large_dev_bound} with a measurement scheme that does not exploit importance sampling, i.e., where all $|\Omega|$ observables are deterministically measured, but that exploits the fact that commuting observables with indices in $\Omega$ can be measured simultaneously in each run. As we show, the resulting bound is less tight than the one in Eq.\ \eqref{eq:unified_exp_sample_complex}.
More precisely, we consider a procedure where all $|\Omega|$ observables are measured the same number of times
 \begin{align}
  \eta=\epsilon^{-2}L^3 \ln(2 |\Omega|/\delta)
 \end{align}
 and we obtain the sample complexity $\mathcal N_{\epsilon,\delta}(\W) =4L \eta$.

 We denote the estimator of $\boldsymbol{\cm}$ by $\boldsymbol{\cm}^\ast$.
The fact that the covariance matrix entries are bounded and lie in the interval $-1< M_\mu<1$ allows us to use Hoeffding's inequality.
Taking  $b= 
 \ln(2|\Omega|/\delta)
$ and making a union bound we find
\begin{align}
\label{eq:exponential_bound}
\mathbb P\Bigl[ 
\forall \mu \in \Omega: \bigl| {\cm}_\mu -{\cm}_\mu^\ast \bigr| 
\leq
\sqrt{2b/\eta}
\Bigr]&=\\
1-\mathbb P\Bigl[ 
\exists \mu \in \Omega: \bigl| {\cm}_\mu -{\cm}_\mu^\ast\bigr| 
\geq
\sqrt{2b/\eta}
\Bigr]&\ge\\
1-|\Omega|\max_{\mu\in\Omega}\mathbb P\Bigl[ 
 \bigl| {\cm}_\mu -{\cm}_\mu^\ast \bigr| 
\geq
\sqrt{2b/\eta}
\Bigr]
&\geq 1-2|\Omega|\e^{-b},
\end{align}
where we have used that $\mathbb{P}[A\cup B]\le \mathbb P [A]+\mathbb P[B]$ for any probability measure $\mathbb P$.
We check that $2|\Omega|\e^{-b}=\delta$
and additionally
\begin{align}
  \sqrt{2 b/\eta}=\sqrt{2\epsilon^2L^{-3}}=\sqrt 2 L^{-3/2}\epsilon
\end{align}
and therefore we have \begin{align}
\mathbb P\left[ 2^{-1/2}L^{3/2}\mnorm{\boldsymbol{\cm} -\boldsymbol{\cm}^\ast} \le \epsilon \right]\ge 1-\delta.
\end{align}
Eq.\ \eqref{eq:large_dev_bound} follows thanks to the following Lemma which tells us that one can efficiently estimate the fidelity lower bound from estimates of the covariance matrix of $\varrho_p$ with small errors.
 
\begin{lemma}[Stability]\label{prop:stability}
The fidelity lower bound $\F(\varrho_\text{p})$ is Lipschitz continuous with Lipschitz constant $L^{3/2}/\sqrt{2}$ with respect to the max-norm, i.e.,  
for any two covariance matrices $\boldsymbol{\cm} $ and $\boldsymbol{\cm}^*$ we have for the respective values of the fidelity witnesses 
\begin{align}
\label{eq:stability}
|\F(\varrho_\text{p})-\F^*(\varrho_\text{p})|
\leq
2^{-1/2} L^{3/2} \mnorm{\boldsymbol{\cm} -\boldsymbol{\cm}^\ast} \, .
\end{align}
\end{lemma}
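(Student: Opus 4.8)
The plan is to exploit that the fidelity lower bound $\F$ depends on the preparation only through its covariance matrix, and does so \emph{affinely}. Reading off Eq.~\eqref{eq:def_Fgeq}, with the target held fixed the map $\boldsymbol{\cm}\mapsto\F$ is $\F = 1 + \tfrac14\tr[(\boldsymbol{\cm}-\boldsymbol{M}(\varrho_\text{t}))^\t\,\boldsymbol{M}(\varrho_\text{t})]$, whose nonconstant part is linear in $\boldsymbol{\cm}$. Hence the two values $\F(\varrho_\text{p})$ and $\F^*(\varrho_\text{p})$, obtained from the true covariance matrix $\boldsymbol{\cm}$ and its estimate $\boldsymbol{\cm}^*$ respectively, differ by a single Hilbert--Schmidt inner product,
\begin{align}
\F(\varrho_\text{p})-\F^*(\varrho_\text{p}) = \tfrac14\,\tr\bigl[(\boldsymbol{\cm}-\boldsymbol{\cm}^*)^\t\,\boldsymbol{M}(\varrho_\text{t})\bigr].
\end{align}
This reduces the entire statement to bounding one trace.

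Next I would apply the Cauchy--Schwarz inequality for the Hilbert--Schmidt inner product, $|\tr[A^\t B]|\le \norm{A}_\mathrm{F}\norm{B}_\mathrm{F}$ with $\norm{\cdot}_\mathrm{F}$ the Frobenius norm, to obtain $|\F(\varrho_\text{p})-\F^*(\varrho_\text{p})|\le \tfrac14\norm{\boldsymbol{\cm}-\boldsymbol{\cm}^*}_\mathrm{F}\,\norm{\boldsymbol{M}(\varrho_\text{t})}_\mathrm{F}$, and then estimate the two Frobenius norms separately. For the target, the key observation is that the covariance matrix of a \emph{pure} fermionic Gaussian state is orthogonal: writing $\boldsymbol{M}(\varrho_\text{t})=\boldsymbol{Q}\,\boldsymbol{M}_{\boldsymbol\omega}\,\boldsymbol{Q}^\t$ with $\boldsymbol{Q}\in\SO(2L)$ and $\boldsymbol{M}_{\boldsymbol\omega}$ the orthogonal matrix of Eq.~\eqref{eq:def_Jw}, one gets $\boldsymbol{M}(\varrho_\text{t})^\t\boldsymbol{M}(\varrho_\text{t})=\id_{2L}$, whence $\norm{\boldsymbol{M}(\varrho_\text{t})}_\mathrm{F}=\sqrt{2L}$. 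For the error term I would use the crude entrywise bound $\norm{\boldsymbol{\cm}-\boldsymbol{\cm}^*}_\mathrm{F}\le 2L\,\mnorm{\boldsymbol{\cm}-\boldsymbol{\cm}^*}$, valid because the $2L\times 2L$ matrix has $(2L)^2$ entries, each bounded in modulus by $\mnorm{\boldsymbol{\cm}-\boldsymbol{\cm}^*}$.

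Combining the three estimates yields $|\F(\varrho_\text{p})-\F^*(\varrho_\text{p})|\le \tfrac14\cdot 2L\cdot\sqrt{2L}\,\mnorm{\boldsymbol{\cm}-\boldsymbol{\cm}^*}=2^{-1/2}L^{3/2}\,\mnorm{\boldsymbol{\cm}-\boldsymbol{\cm}^*}$, which is exactly the claimed Lipschitz bound. I expect the only genuine subtlety to be obtaining the \emph{sharp} constant rather than a correct-but-loose one: a naive route bounding the trace by $\tfrac14\sum_{j,k}|(\boldsymbol{\cm}-\boldsymbol{\cm}^*)_{j,k}|\,|M(\varrho_\text{t})_{j,k}|$ and then using $|M(\varrho_\text{t})_{j,k}|\le 1$ across all $(2L)^2$ entries would give the weaker constant $L^2$. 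The improvement to $L^{3/2}$ hinges on pairing Cauchy--Schwarz with the exact identity $\norm{\boldsymbol{M}(\varrho_\text{t})}_\mathrm{F}^2=2L$ stemming from purity, so the orthogonality of the target covariance matrix is the one ingredient that must not be replaced by a cruder estimate.
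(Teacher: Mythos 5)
Your proof is correct and lands on exactly the paper's constant, and it follows the same skeleton as the paper's proof---exploit that $\F$ is affine in the covariance matrix, bound the resulting trace inner product with $\boldsymbol{M}(\varrho_\text{t})$ by a H\"older-type inequality, and convert to the max-norm by counting entries---but with a genuinely different norm pairing. The paper uses the $(1,\infty)$ pairing: it bounds the trace by $\norm{\boldsymbol{\cm}-\boldsymbol{\cm}^*}_1\,\snormb{\boldsymbol{Q}^\t\boldsymbol{\J}_L\boldsymbol{Q}}$, gets $\snormb{\boldsymbol{Q}^\t\boldsymbol{\J}_L\boldsymbol{Q}}=1$ from unitary invariance of the spectral norm, and then compresses $\norm{\argdot}_1\le\sqrt{2L}\,\norm{\argdot}_2\le\sqrt{2L}\cdot 2L\,\mnorm{\argdot}$. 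You instead use the $(2,2)$ (Cauchy--Schwarz) pairing, so the factor $\sqrt{2L}$ sits on the target, via $\norm{\boldsymbol{M}(\varrho_\text{t})}_2=\sqrt{2L}$, and the factor $2L$ on the error matrix; the two routes accumulate the identical total $(2L)^{3/2}/4=2^{-1/2}L^{3/2}$, and yours is one norm inequality shorter. One small correction to your closing remark: full orthogonality of $\boldsymbol{M}(\varrho_\text{t})$ is not the indispensable ingredient you make it out to be. Your Cauchy--Schwarz step only needs $\norm{\boldsymbol{M}(\varrho_\text{t})}_2\le\sqrt{2L}$, which already follows from $\snorm{\boldsymbol{M}(\varrho_\text{t})}\le 1$, a property of \emph{every} physical covariance matrix, pure or mixed; purity merely upgrades this inequality to an equality. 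The paper's $(1,\infty)$ pairing makes this visible, since it only ever invokes $\snorm{\boldsymbol{\J}_L}=1$, so the stability bound itself holds with no purity assumption on the target covariance matrix.
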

In the following proof, we denote the trace-norm by $\norm{\argdot}_1$,
the Schatten $2$-norm (or Frobenius norm) by $\norm{\argdot}_2$, and 
the spectral norm by $\snorm{\argdot}$.

\begin{proof}[Proof of Lemma~\ref{prop:stability}]
  Let 
   \begin{align}
  \boldsymbol{\J}_L=\oplus_{k=1}^L  \begin{pmatrix} 
  0 & 1 \\ -1 & 0 
  \end{pmatrix}.
    \end{align} 
It is enough to show that the linear map $\boldsymbol{\cm}  \mapsto \tr[\boldsymbol{Q}\,\boldsymbol{\cm}\, \boldsymbol{Q}^\t\, \boldsymbol{\J}_L]$  is Lipschitz continuous at the origin with Lipschitz constant $(2L)^{3/2}$. 
  
  By H\"olders inequality we have
  \begin{align}
    \bigl|\tr[\boldsymbol{Q}\,\boldsymbol{\cm}\, \boldsymbol{Q}^\t\, \boldsymbol{\J}_L]\bigr|
    &= 
    \bigl|\tr[\boldsymbol{\cm}\, \boldsymbol{Q}^\t\, \boldsymbol{\J}_L \,\boldsymbol{Q}]\bigr|
    \\
    &\leq 
    \norm{\boldsymbol{\cm}}_1 \snorm{\boldsymbol{Q}^\t\, \boldsymbol{\J}_L \,\boldsymbol{Q}} =  \norm{\boldsymbol{\cm}}_1 \, ,
    \label{eq:bound_M_tr_norm}
  \end{align}
where we have used that 
$\norm{\argdot}_\infty$ unitarily invariant and that $\norm{\boldsymbol{\J}_L}_\infty =1$
in the last step.
It remains to show that $\norm{\boldsymbol{\cm}}_1\leq 2L$. 
But for any $2L \times 2L$ matrix $\boldsymbol{\cm}$ it holds that 
\begin{align}
  \norm{\boldsymbol{\cm}}_1 \leq \sqrt{2L} \norm{\boldsymbol{\cm}}_2 \leq \sqrt{2L}\, 2L \mnorm{\boldsymbol{\cm}},
  \label{eq:bound_M_max_norm}
\end{align}
where we have used 
(i) a general norm inequality for the Schatten $1$- and $2$-norm, 
(ii) that the Schatten $2$-norm is the same as the vector $2$-norm of the vectorized matrix, 
(iii) a general norm inequality for the vector $2$-norm and the vector $\infty$-norm, and
(iv) that the vector $\infty$-norm of a vectorized matrix is the max-norm of the matrix. 
Note that the bound \eqref{eq:bound_M_max_norm} is tight for general matrices, as can be seen by choosing $\boldsymbol{\cm}$ as the discrete Fourier transform matrix on $\CC^{2L}$. 
Inserting Eq.~\eqref{eq:bound_M_max_norm} into \eqref{eq:bound_M_tr_norm} completes the proof. 
\end{proof}

Finally, in order to derive the sample complexity, we need to partition the set $[2L]\times[2L]$ such that the corresponding elements of the covariance matrix commute.
We do it by considering bands parallel to the diagonal of the covariance matrix.
Let us consider the non-trivial elements closest to the diagonal $\mu=(i,i+1)$. 
We bi-partition this band into indices starting with an even or an odd number.
By construction, all associated covariance matrix observables will commute.
As there are in total $2L-1$ such off-diagonals, the total number of i.i.d. state preparations is bounded by
\begin{align}
\label{eq:l_variance}
  \mathcal N_{\epsilon,\delta}(\W) =4L \eta=\frac{4 L^4 \ln(2 |\Omega|/\delta)}{\epsilon^2}\;.
\end{align}
Since $|\Omega|\le2L^2+L$, this scaling is logarithmically worse in $L$ than in Eq.\ \eqref{eq:unified_exp_sample_complex}.

\subsection{Robustness of the certification test}
\label{sec:robustness}

Ref.~\cite{aolita2015reliable} established a framework of certification where the notion of \emph{robust quantum state certification} has been defined.
In particular, in such a certification test, one desires to accept states above a threshold fidelity $F_{\th}$ and requires to reject states below $F_{\th}$. 
But a realistic certification test cannot resolve fidelities $F$ very close to $F_{\th}$ and thus one needs to allow for a fidelity region that remains undetermined. 
This idea leads to a robust certification test \cite{aolita2015reliable,Hangleiter16}, where one allows for a fidelity gap $\Delta<1-F_{\th}$. 
A \emph{robust} test is guaranteed to accept a preparation $\varrho_p$ if $F\ge F_{\th}+\Delta$, to reject it if $F< F_{\th}$, and possibly accept it in the intermediate region.
These conditions for the test concern the exact fidelity and need to be translated to a statement concerning the estimate of the witness $\F^*(\varrho_\text{p})$.
\def\S{\mathcal S_\perp(\Delta, \epsilon)}
We will show that for all preparations $\varrho_p$ in a certain class of states $\S$ it suffices to compare the estimator $\F^*(\varrho_\text{p})$ to the number $F_\text{T}+\epsilon$.
In other words, such test is robust 
\begin{enumerate}[i)]
\item \label{item:reject}
if $\varrho_p$ is such that $F<F_{\th}$ then in the same time the witness will testify this i.e. $\F^*(\varrho_\text{p})< F_{\th}+\epsilon$.
This means that whenever the test \emph{has to} reject a preparation, then it will. 
\item \label{item:accept}
if $F\ge F_{\th}+\Delta$ then we have $\F^*(\varrho_\text{p})\ge F_{\th}+\epsilon$.
That is, whenever the fidelity is larger then the threshold fidelity enlarged by the fidelity gap, then the preparation is accepted by the test.
\end{enumerate}
Note, that if $F\in[F_{\th}, F_{\th}+\Delta]$, then the certification test might accept or reject the preparation. 
Specifically, the class $\S$ characterizes the set of preparations $\varrho_p$ where the witness behaves as a weak oracle separating $F\le F_\text{T}$ from $F\ge F_\text{T}+\Delta$. 
We now construct this class.
With a given target  state $U\ket{\boldsymbol{\omega}}$ and its corresponding fidelity witness in mind, we define a mismatch parameter of some preparation state $\varrho_p$ to be
\begin{align}\label{eq:def_mismatch}
n_\perp(\varrho_p)
\coloneqq
\tr[\hat n^{(\boldsymbol{\omega})}\,U^\dagger\, \varrho_p\, U]\ge0\;.
\end{align}
Let us note that the preparation $\varrho_p$ can be decomposed with the Hilbert-Schmidt inner product into the target state $\varrho_\text{t}$ and a orthogonal contribution $\varrho_\perp\coloneqq\varrho_\perp(\varrho_\text{t}, \varrho_p)$, that is $\varrho_p=F \varrho_\text{t} +(1-F)\varrho_\perp$ for $0\le F=\tr[\varrho_p\, \varrho_\text{t}] \le 1$ and $\tr[\varrho_\text{t}\,\varrho_\perp]=0$.
Using linearity of our witness for this decomposition yields
\begin{align}
\F(\varrho_\text{p})&=F+(1-F)(1- \tr[U\,\hat n^{(\boldsymbol{\omega})}\,U^\dagger\, \varrho_\perp])\nonumber\\
&=1-(1-F)n_\perp(\varrho_\perp)\;.\label{eq:FW_mismatch}
\end{align}
Therefore the mismatch content has the properties $n_\perp(\varrho_\text{t})=0$ and $n_\perp(\varrho_p)=(1-F)n_\perp(\varrho_\perp)$.
For a given maximum estimation error
 $0<\epsilon<(1-F_\text{T})/2$, {fidelity gap} $\Delta>2\,\epsilon$ and {fidelity threshold} $F_\text{T}<1$ we define the \emph{mismatch content threshold}
\begin{align}\label{eq:def_n_perp_th}
n_{\perp,\th}(\Delta,\epsilon)\coloneqq\frac{1-F_\text{T}-2\,\epsilon}{1-F_{\th}-\Delta}\;.
\end{align}
 This allows us to consider the following subset of all states $\mathcal S$
\begin{equation}\label{eq:low_mismatch_states}
  \S (\Delta,\epsilon)
  =
  \{ \varrho \in \mathcal S  \,|\, n_\perp(\varrho) \le n_{\perp,\th}(\Delta,\epsilon)\}\;.
\end{equation}
It is a convex set containing mixtures of states with possibly very large mismatch content $n_{\perp,\th}$ and which  includes the target state $\varrho_\text{t}$ in its interior.
The following theorem states that our fidelity witness leads to a robust certification test.

\begin{theorem}[Robust certification of pure Gaussian states]
	Let $F_\text{T}<1$ be a threshold fidelity, $\delta>0$ a maximal failure probability , $0<\epsilon<(1-F_\text{T})/2$ a maximal estimation error and $2\,\epsilon<\Delta<1-F_\text{T}$ a fidelity gap.
	Let  $\varrho_\text{t}$ be a pure Gaussian state and $\varrho_p$ a preparation state.
	Let $\F^*(\varrho_\text{p})$ be the estimator of the fidelity witness from Theorem~\ref{thm:main}. 
	The test accepting the preparation if $\F^*(\varrho_\text{p})\ge F_\text{T}+\epsilon$ and rejecting if $\F^*(\varrho_\text{p}) < F_\text{T}+\epsilon$ yields a  robust certification of $\varrho_\text{t}$ if $\varrho_p\in \S$. 
  For states with high enough fidelity $F>1-L^{-2}$ the witness yields a non-trivial lower bound $\F(\varrho_\text{p})\ge 0$. 
\end{theorem}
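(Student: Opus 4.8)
The plan is to reduce all three assertions to the exact identity~\eqref{eq:FW_mismatch}, $\F(\varrho_\text{p}) = 1-(1-F)\,n_\perp(\varrho_\perp)$, combined with the statistical promise of Theorem~\ref{thm:main}: on the event of probability at least $1-\delta$ one has $|\F(\varrho_\text{p})-\F^*(\varrho_\text{p})|\le\epsilon$. Every claim then becomes an elementary inequality for $\F(\varrho_\text{p})$ that passes to the estimator $\F^*(\varrho_\text{p})$ at the cost of the slack $\epsilon$; throughout I work on this high-probability event.

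For the rejection property~\ref{item:reject} the class $\S$ is not even needed. Since $\W$ is a fidelity witness, property~\ref{item:geq} of Definition~\ref{def:witness} gives the unconditional bound $\F(\varrho_\text{p})\le F$, so $\F^*(\varrho_\text{p})\le\F(\varrho_\text{p})+\epsilon\le F+\epsilon$. If $F<F_\text{T}$ this is strictly below $F_\text{T}+\epsilon$, so the test rejects, as required, and this holds for every preparation.

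The acceptance property~\ref{item:accept} is where the content lies and is the step I expect to be the main obstacle. Starting from~\eqref{eq:FW_mismatch} I would bound the two factors of the correction term separately: the hypothesis $F\ge F_\text{T}+\Delta$ gives $1-F\le 1-F_\text{T}-\Delta$, while membership $\varrho_p\in\S$ bounds the orthogonal mismatch by $n_\perp(\varrho_\perp)\le n_{\perp,\th}(\Delta,\epsilon)$. As both factors are non-negative, their product is at most $(1-F_\text{T}-\Delta)\,n_{\perp,\th}$, which by the definition~\eqref{eq:def_n_perp_th} of the threshold equals $1-F_\text{T}-2\epsilon$; hence $\F(\varrho_\text{p})\ge F_\text{T}+2\epsilon$ and $\F^*(\varrho_\text{p})\ge\F(\varrho_\text{p})-\epsilon\ge F_\text{T}+\epsilon$, so the preparation is accepted. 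The delicate bookkeeping here is that $\S$ must constrain the mismatch of the \emph{orthogonal component} $\varrho_\perp$ and not of $\varrho_p$ itself: because $\Delta>2\epsilon$ forces $n_{\perp,\th}>1$, the naive bound $n_\perp(\varrho_p)\le n_{\perp,\th}$ would be useless, and it is exactly the extra factor $1-F$ furnished by the fidelity hypothesis that tightens the estimate.

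Finally, for the non-triviality claim I would use that $\hat n^{(\boldsymbol{\omega})}$ is a sum of $L$ commuting single-mode number projectors, whence $0\preceq\hat n^{(\boldsymbol{\omega})}\preceq L\,\id$ and therefore $n_\perp(\varrho_\perp)=\tr[\hat n^{(\boldsymbol{\omega})}\,U^\dagger\varrho_\perp U]\le L$ for any state. Inserting this worst-case value into~\eqref{eq:FW_mismatch} gives $\F(\varrho_\text{p})\ge 1-(1-F)L$, so the hypothesis $F>1-L^{-2}$ yields $\F(\varrho_\text{p})>1-L^{-1}\ge 0$ for every $L\ge 1$, the asserted non-trivial lower bound.
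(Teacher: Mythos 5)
Your proof is correct, and for the two certification properties it follows essentially the same route as the paper: rejection uses only the witness property $\F(\varrho_\text{p})\le F$ together with the estimation promise of Theorem~\ref{thm:main}, and acceptance rests on the same two-factor estimate $(1-F)\,n_\perp(\varrho_\perp)\le(1-F_\text{T}-\Delta)\,n_{\perp,\th}=1-F_\text{T}-2\epsilon$, giving $\F(\varrho_\text{p})\ge F_\text{T}+2\epsilon$ and hence $\F^*(\varrho_\text{p})\ge F_\text{T}+\epsilon$ on the good event. Two differences are worth recording. First, your ``delicate bookkeeping'' remark diagnoses a real wrinkle in the paper: the set $\mathcal S_\perp(\Delta,\epsilon)$ is defined in Eq.~\eqref{eq:low_mismatch_states} by the condition $n_\perp(\varrho_p)\le n_{\perp,\th}$ on the preparation itself, and the paper's proof even says ``assume that $n_\perp(\varrho_p)\le n_{\perp,\th}$'', yet its displayed chain of inequalities applies that bound to the quantity $n_\perp$ appearing in Eq.~\eqref{eq:FW_mismatch}, which is $n_\perp(\varrho_\perp)$. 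As you observe, the literal condition cannot suffice: since $n_{\perp,\th}>1$, a preparation with $F=F_\text{T}+\Delta$ and $n_\perp(\varrho_\perp)$ of order $L$ can satisfy $n_\perp(\varrho_p)\le n_{\perp,\th}$ while $\F(\varrho_\text{p})=1-n_\perp(\varrho_p)$ falls below $F_\text{T}+2\epsilon$ (even below zero), so acceptance would fail. Your reading---membership constrains the mismatch of the orthogonal component---is the one under which the theorem is true, and it is what the paper's own computation implicitly uses. Second, for the final claim that $F>1-L^{-2}$ implies $\F(\varrho_\text{p})\ge0$, you argue differently: from $0\preceq n^{(\boldsymbol\omega)}\preceq L\,\id$ you get $n_\perp(\varrho_\perp)\le L$, hence $\F(\varrho_\text{p})\ge1-(1-F)L>1-L^{-1}$. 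The paper instead invokes the Fuchs--van de Graaf inequality $D(\varrho_p,\varrho_\text{t})\le\sqrt{1-F}\le L^{-1}$ and bounds each of the $L$ projector expectations in $n_\perp(\varrho_p)$ by the trace distance, obtaining $n_\perp(\varrho_p)\le1$ and thus only $\F(\varrho_\text{p})\ge0$. Your argument is more elementary and strictly stronger in this regime, since $1-F<\sqrt{1-F}$ when $1-F<1$.
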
 

\begin{proof}
The rejection Property \ref{item:reject}) follows by observing that according to Theorem~\ref{thm:main} we have with probability at least $1-\delta$ that 
\begin{align}
  |\F^*(\varrho_\text{p})-\F(\varrho_\text{p})|\le\epsilon 
\end{align}
from which it follows that
\begin{align}
 \F^*(\varrho_\text{p})-\epsilon
 \le
 \F(\varrho_\text{p})\;.
\end{align}
Next we use that the fidelity witness is a lower bound to the fidelity $F$ and that in case \ref{item:reject}) we have $ F  < F_{\th}$ to get the chain
\begin{align}
  \F^*(\varrho_\text{p})-\epsilon \le  \F(\varrho_\text{p}) \le F < F_\text{T}\;.
\end{align}
Therefore, if   $ F  < F_{\th}$ then $\F^*(\varrho_\text{p}) < F_\text{T}+\epsilon$.
In this step we did not need to assume anything on the preparation $\varrho_p$.

Secondly, we show that the test has the acceptance Property \ref{item:accept}) as well.
We now use $F\geq F_\text{T} + \Delta$ and assume that $n_\perp(\varrho_p)\le n_{\perp,\th}$ to obtain from Eq.~\eqref{eq:FW_mismatch}
\begin{align}
\F(\varrho_\text{p}) &=1-(1-F) n_\perp \nonumber \\
&\ge  1-n_{\perp,\th}+Fn_{\perp,\th} \nonumber\\
&\ge  1-n_{\perp,\th}+(F_\text{T}+\Delta)n_{\perp,\th} \, 
\end{align}
 which with the definition of the mismatch content \eqref{eq:def_n_perp_th} 
becomes
\begin{align}
\F(\varrho_\text{p}) &\ge  1-(1-F_\text{T}-\Delta)n_{\perp,\th} \\
&\ge 1-(1- F_{\th}-2\,\epsilon) \\
&\ge  F_{\th}+2\,\epsilon \, .
\end{align}
Therefore, we find with probability at least $1-\delta$ the inequality for the estimator of the fidelity witness
\begin{align}
\F^*(\varrho_\text{p})\ge \F(\varrho_\text{p})-\epsilon\ge F_{\th}+\epsilon\;.
\end{align}
These two steps show that the test is robust for $\varrho\in \S$.

Finally let us assume $F_\text{T}\ge1- L^{-2}$. 
We need to show that if $F\ge F_\text{T}+\Delta$ then $\F^*(\varrho_\text{p})\ge F_\text{T}+\epsilon$.
By Fuchs-van der Graaf inequality we have $D(\varrho_p,\varrho_\text{t})\le \sqrt{1-F}\le\sqrt{1-F_\text{T}}=L^{-1}$.
From this bound it follows that 
\begin{align}
\eta&=\sum_{k=1}^L\left[(1-w_k)\tr[n_kU^\dagger\varrho_pU]+w_k\tr[(\id-n_k)U^\dagger\varrho_pU]\right]\\
&\le\sum_{k=1}^L L^{-1} =1.
\end{align}
From this bound, we find
\begin{align}
\F(\varrho_\text{p}) =1-\eta \ge 0 \;.
\end{align}
\end{proof}

Note that the witness is exact $\F(\varrho_\text{p})=F$ for preparation state vectors supported on the Hilbert space subspace $\text{Span}(\{U\vacket, U f_1^\dagger\vacket,\ldots,U f_L^\dagger\vacket\})$.
Finally, Eq.\ \eqref{eq:FW_mismatch} allows to intuitively understand when exactly the witness fails to be an oracle, which we illustrate with one last example.

\subsection*{ Example: symmetry breaking}
Consider a scenario, where the system has initially a $\ZZ_2$ symmetry between the vacuum $\vacket$ and the fully occupied state 
vector $\ket{\overline 1}$, and then at some point \emph{spontaneous symmetry breaking} 
occurs such that the system must choose one of the two states.
If the preparation is given by $ \varrho_p=U\left[(1-\lambda)\vac + \lambda | \overline 1 \rangle\langle \overline 1|\right]U^\dagger$ then the mismatch is a very good way of quantifying the fidelity of symmetry breaking, namely $n_\perp(\varrho_p)=\lambda\langle \overline 1|\hat N |\overline 1\rangle=\lambda L$ is a good order parameter.
The mismatch is low for $\lambda \ll 1/L$  which occurs for high values of our witness and it therefore allows to show that the system  chose the vacuum in the $\mathbb Z_2$ symmetry breaking.
Note, that the mismatch parameter will be high for many-particle GHZ states, but those are expected to be unstable and will not occur for no reason e.g. due to incoherent noise.
In particular, low mismatch is also a natural assumption when certifying a digital simulation of the transverse field Ising Hamiltonian. 
As a final corollary to this example, note that for an $L$-mode system we have $\| n^{(\boldsymbol{\omega})}\|=L$ and therefore for all states $\varrho$ in a ball defined by $\|\varrho-\varrho_\text{t}\|_1\le 1/L$ we will find $\F(\varrho_\text{p})\ge0$.


\begin{thebibliography}{67}%
\makeatletter
\providecommand \@ifxundefined [1]{%
 \@ifx{#1\undefined}
}%
\providecommand \@ifnum [1]{%
 \ifnum #1\expandafter \@firstoftwo
 \else \expandafter \@secondoftwo
 \fi
}%
\providecommand \@ifx [1]{%
 \ifx #1\expandafter \@firstoftwo
 \else \expandafter \@secondoftwo
 \fi
}%
\providecommand \natexlab [1]{#1}%
\providecommand \enquote  [1]{``#1''}%
\providecommand \bibnamefont  [1]{#1}%
\providecommand \bibfnamefont [1]{#1}%
\providecommand \citenamefont [1]{#1}%
\providecommand \href@noop [0]{\@secondoftwo}%
\providecommand \href [0]{\begingroup \@sanitize@url \@href}%
\providecommand \@href[1]{\@@startlink{#1}\@@href}%
\providecommand \@@href[1]{\endgroup#1\@@endlink}%
\providecommand \@sanitize@url [0]{\catcode `\\12\catcode `\$12\catcode
  `\&12\catcode `\#12\catcode `\^12\catcode `\_12\catcode `\%12\relax}%
\providecommand \@@startlink[1]{}%
\providecommand \@@endlink[0]{}%
\providecommand \url  [0]{\begingroup\@sanitize@url \@url }%
\providecommand \@url [1]{\endgroup\@href {#1}{\urlprefix }}%
\providecommand \urlprefix  [0]{URL }%
\providecommand \Eprint [0]{\href }%
\providecommand \doibase [0]{http://dx.doi.org/}%
\providecommand \selectlanguage [0]{\@gobble}%
\providecommand \bibinfo  [0]{\@secondoftwo}%
\providecommand \bibfield  [0]{\@secondoftwo}%
\providecommand \translation [1]{[#1]}%
\providecommand \BibitemOpen [0]{}%
\providecommand \bibitemStop [0]{}%
\providecommand \bibitemNoStop [0]{.\EOS\space}%
\providecommand \EOS [0]{\spacefactor3000\relax}%
\providecommand \BibitemShut  [1]{\csname bibitem#1\endcsname}%
\let\auto@bib@innerbib\@empty
\bibitem [{\citenamefont {Feynman}(1982)}]{Fey82}%
  \BibitemOpen
  \bibfield  {author} {\bibinfo {author} {\bibfnamefont {R.~P.}\ \bibnamefont
  {Feynman}},\ }\bibinfo {title} {\emph {Simulating physics with computers}},\
  \href@noop {} {\bibfield  {journal} {\bibinfo  {journal} {Int. J. Theor.
  Phys.}\ }\textbf {\bibinfo {volume} {21}},\ \bibinfo {pages} {467} (\bibinfo
  {year} {1982})}\BibitemShut {NoStop}%
\bibitem [{\citenamefont {Cirac}\ and\ \citenamefont
  {Zoller}(2012)}]{cirac2012goals}%
  \BibitemOpen
  \bibfield  {author} {\bibinfo {author} {\bibfnamefont {J.~I.}\ \bibnamefont
  {Cirac}}\ and\ \bibinfo {author} {\bibfnamefont {P.}~\bibnamefont {Zoller}},\
  }\bibinfo {title} {\emph {Goals and opportunities in quantum simulation}},\
  \href@noop {} {\bibfield  {journal} {\bibinfo  {journal} {Nature Phys.}\
  }\textbf {\bibinfo {volume} {8}},\ \bibinfo {pages} {264} (\bibinfo {year}
  {2012})}\BibitemShut {NoStop}%
\bibitem [{\citenamefont {Aspuru-Guzik}\ and\ \citenamefont
  {Walther}(2012)}]{aspuru2012photonic}%
  \BibitemOpen
  \bibfield  {author} {\bibinfo {author} {\bibfnamefont {A.}~\bibnamefont
  {Aspuru-Guzik}}\ and\ \bibinfo {author} {\bibfnamefont {P.}~\bibnamefont
  {Walther}},\ }\bibinfo {title} {\emph {Photonic quantum simulators}},\
  \href@noop {} {\bibfield  {journal} {\bibinfo  {journal} {Nature Phys.}\
  }\textbf {\bibinfo {volume} {8}},\ \bibinfo {pages} {285} (\bibinfo {year}
  {2012})}\BibitemShut {NoStop}%
\bibitem [{\citenamefont {Blatt}\ and\ \citenamefont
  {Roos}(2012{\natexlab{a}})}]{blatt2012quantum}%
  \BibitemOpen
  \bibfield  {author} {\bibinfo {author} {\bibfnamefont {R.}~\bibnamefont
  {Blatt}}\ and\ \bibinfo {author} {\bibfnamefont {C.}~\bibnamefont {Roos}},\
  }\bibinfo {title} {\emph {Quantum simulations with trapped ions}},\
  \href@noop {} {\bibfield  {journal} {\bibinfo  {journal} {Nature Phys.}\
  }\textbf {\bibinfo {volume} {8}},\ \bibinfo {pages} {277} (\bibinfo {year}
  {2012}{\natexlab{a}})}\BibitemShut {NoStop}%
\bibitem [{\citenamefont {Houck}\ \emph {et~al.}(2012)\citenamefont {Houck},
  \citenamefont {T{\"u}reci},\ and\ \citenamefont
  {Koch}}]{houck2012chip_review}%
  \BibitemOpen
  \bibfield  {author} {\bibinfo {author} {\bibfnamefont {A.~A.}\ \bibnamefont
  {Houck}}, \bibinfo {author} {\bibfnamefont {H.~E.}\ \bibnamefont
  {T{\"u}reci}}, \ and\ \bibinfo {author} {\bibfnamefont {J.}~\bibnamefont
  {Koch}},\ }\bibinfo {title} {\emph {On-chip quantum simulation with
  superconducting circuits}},\ \href@noop {} {\bibfield  {journal} {\bibinfo
  {journal} {Nature Phys.}\ }\textbf {\bibinfo {volume} {8}},\ \bibinfo {pages}
  {292} (\bibinfo {year} {2012})}\BibitemShut {NoStop}%
\bibitem [{\citenamefont {Bloch}\ \emph {et~al.}(2012)\citenamefont {Bloch},
  \citenamefont {Dalibard},\ and\ \citenamefont
  {Nascimbene}}]{bloch2012quantum}%
  \BibitemOpen
  \bibfield  {author} {\bibinfo {author} {\bibfnamefont {I.}~\bibnamefont
  {Bloch}}, \bibinfo {author} {\bibfnamefont {J.}~\bibnamefont {Dalibard}}, \
  and\ \bibinfo {author} {\bibfnamefont {S.}~\bibnamefont {Nascimbene}},\
  }\bibinfo {title} {\emph {Quantum simulations with ultracold quantum
  gases}},\ \href@noop {} {\bibfield  {journal} {\bibinfo  {journal} {Nature
  Phys.}\ }\textbf {\bibinfo {volume} {8}},\ \bibinfo {pages} {267} (\bibinfo
  {year} {2012})}\BibitemShut {NoStop}%
\bibitem [{\citenamefont {Schneider}\ \emph
  {et~al.}(2012{\natexlab{a}})\citenamefont {Schneider}, \citenamefont
  {Porras},\ and\ \citenamefont {Schaetz}}]{Schneider12}%
  \BibitemOpen
  \bibfield  {author} {\bibinfo {author} {\bibfnamefont {C.}~\bibnamefont
  {Schneider}}, \bibinfo {author} {\bibfnamefont {D.}~\bibnamefont {Porras}}, \
  and\ \bibinfo {author} {\bibfnamefont {T.}~\bibnamefont {Schaetz}},\
  }\bibinfo {title} {\emph {Experimental quantum simulations of many-body
  physics with trapped ions}},\ \href@noop {} {\bibfield  {journal} {\bibinfo
  {journal} {Rep. Prog. Phys.}\ }\textbf {\bibinfo {volume} {75}},\ \bibinfo
  {pages} {024401} (\bibinfo {year} {2012}{\natexlab{a}})}\BibitemShut
  {NoStop}%
\bibitem [{\citenamefont {{Eisert}}\ \emph {et~al.}(2015)\citenamefont
  {{Eisert}}, \citenamefont {{Friesdorf}},\ and\ \citenamefont
  {{Gogolin}}}]{EisFriGog15}%
  \BibitemOpen
  \bibfield  {author} {\bibinfo {author} {\bibfnamefont {J.}~\bibnamefont
  {{Eisert}}}, \bibinfo {author} {\bibfnamefont {M.}~\bibnamefont
  {{Friesdorf}}}, \ and\ \bibinfo {author} {\bibfnamefont {C.}~\bibnamefont
  {{Gogolin}}},\ }\bibinfo {title} {\emph {Quantum many-body systems out of
  equilibrium}},\ \href@noop {} {\bibfield  {journal} {\bibinfo  {journal}
  {Nature Phys.}\ }\textbf {\bibinfo {volume} {11}},\ \bibinfo {pages} {124}
  (\bibinfo {year} {2015})}\BibitemShut {NoStop}%
\bibitem [{\citenamefont {Friedenauer}\ \emph {et~al.}(2008)\citenamefont
  {Friedenauer}, \citenamefont {Schmitz}, \citenamefont {Glueckert},
  \citenamefont {Porras},\ and\ \citenamefont {Schaetz}}]{Friedenauer08}%
  \BibitemOpen
  \bibfield  {author} {\bibinfo {author} {\bibfnamefont {H.}~\bibnamefont
  {Friedenauer}}, \bibinfo {author} {\bibfnamefont {H.}~\bibnamefont
  {Schmitz}}, \bibinfo {author} {\bibfnamefont {J.}~\bibnamefont {Glueckert}},
  \bibinfo {author} {\bibfnamefont {D.}~\bibnamefont {Porras}}, \ and\ \bibinfo
  {author} {\bibfnamefont {T.}~\bibnamefont {Schaetz}},\ }\bibinfo {title}
  {\emph {Simulating a quantum magnet with trapped ions}},\ \href@noop {}
  {\bibfield  {journal} {\bibinfo  {journal} {Nature Phys.}\ }\textbf {\bibinfo
  {volume} {4}},\ \bibinfo {pages} {757} (\bibinfo {year} {2008})}\BibitemShut
  {NoStop}%
\bibitem [{\citenamefont {Kim}\ \emph {et~al.}(2010)\citenamefont {Kim},
  \citenamefont {Chang}, \citenamefont {Korenblit}, \citenamefont {Islam},
  \citenamefont {Edwards}, \citenamefont {Freericks}, \citenamefont {Lin},
  \citenamefont {Duan},\ and\ \citenamefont {Monroe}}]{Kim10}%
  \BibitemOpen
  \bibfield  {author} {\bibinfo {author} {\bibfnamefont {K.}~\bibnamefont
  {Kim}}, \bibinfo {author} {\bibfnamefont {M.-S.}\ \bibnamefont {Chang}},
  \bibinfo {author} {\bibfnamefont {S.}~\bibnamefont {Korenblit}}, \bibinfo
  {author} {\bibfnamefont {R.}~\bibnamefont {Islam}}, \bibinfo {author}
  {\bibfnamefont {E.~E.}\ \bibnamefont {Edwards}}, \bibinfo {author}
  {\bibfnamefont {J.~K.}\ \bibnamefont {Freericks}}, \bibinfo {author}
  {\bibfnamefont {G.-D.}\ \bibnamefont {Lin}}, \bibinfo {author} {\bibfnamefont
  {L.-M.}\ \bibnamefont {Duan}}, \ and\ \bibinfo {author} {\bibfnamefont
  {C.}~\bibnamefont {Monroe}},\ }\bibinfo {title} {\emph {Quantum simulation of
  frustrated Ising spins with trapped ions}},\ \href@noop {} {\bibfield
  {journal} {\bibinfo  {journal} {Nature}\ }\textbf {\bibinfo {volume} {465}},\
  \bibinfo {pages} {590} (\bibinfo {year} {2010})}\BibitemShut {NoStop}%
\bibitem [{\citenamefont {Islam}\ \emph {et~al.}(2011)\citenamefont {Islam},
  \citenamefont {Edwards}, \citenamefont {Kim}, \citenamefont {Korenblit},
  \citenamefont {Noh}, \citenamefont {Carmichael}, \citenamefont {Lin},
  \citenamefont {Duan}, \citenamefont {Wang}, \citenamefont {Freericks},\ and\
  \citenamefont {Monroe}}]{Islam11}%
  \BibitemOpen
  \bibfield  {author} {\bibinfo {author} {\bibfnamefont {R.}~\bibnamefont
  {Islam}}, \bibinfo {author} {\bibfnamefont {E.~E.}\ \bibnamefont {Edwards}},
  \bibinfo {author} {\bibfnamefont {K.}~\bibnamefont {Kim}}, \bibinfo {author}
  {\bibfnamefont {S.}~\bibnamefont {Korenblit}}, \bibinfo {author}
  {\bibfnamefont {C.}~\bibnamefont {Noh}}, \bibinfo {author} {\bibfnamefont
  {H.}~\bibnamefont {Carmichael}}, \bibinfo {author} {\bibfnamefont {G.-D.}\
  \bibnamefont {Lin}}, \bibinfo {author} {\bibfnamefont {L.-M.}\ \bibnamefont
  {Duan}}, \bibinfo {author} {\bibfnamefont {C.-C.~J.}\ \bibnamefont {Wang}},
  \bibinfo {author} {\bibfnamefont {J.~K.}\ \bibnamefont {Freericks}}, \ and\
  \bibinfo {author} {\bibfnamefont {C.}~\bibnamefont {Monroe}},\ }\bibinfo
  {title} {\emph {Onset of a quantum phase transition with a trapped ion
  quantum simulator}},\ \href@noop {} {\bibfield  {journal} {\bibinfo
  {journal} {Nature Comm.}\ }\textbf {\bibinfo {volume} {2}},\ \bibinfo {pages}
  {377} (\bibinfo {year} {2011})}\BibitemShut {NoStop}%
\bibitem [{\citenamefont {Lanyon}\ \emph {et~al.}(2017)\citenamefont {Lanyon},
  \citenamefont {Maier}, \citenamefont {Holz{\"a}pfel}, \citenamefont
  {Baumgratz}, \citenamefont {Hempel}, \citenamefont {Jurcevic}, \citenamefont
  {Dhand}, \citenamefont {Buyskikh}, \citenamefont {Daley}, \citenamefont
  {Cramer} \emph {et~al.}}]{lanyon2016efficient}%
  \BibitemOpen
  \bibfield  {author} {\bibinfo {author} {\bibfnamefont {B.}~\bibnamefont
  {Lanyon}}, \bibinfo {author} {\bibfnamefont {C.}~\bibnamefont {Maier}},
  \bibinfo {author} {\bibfnamefont {M.}~\bibnamefont {Holz{\"a}pfel}}, \bibinfo
  {author} {\bibfnamefont {T.}~\bibnamefont {Baumgratz}}, \bibinfo {author}
  {\bibfnamefont {C.}~\bibnamefont {Hempel}}, \bibinfo {author} {\bibfnamefont
  {P.}~\bibnamefont {Jurcevic}}, \bibinfo {author} {\bibfnamefont
  {I.}~\bibnamefont {Dhand}}, \bibinfo {author} {\bibfnamefont
  {A.}~\bibnamefont {Buyskikh}}, \bibinfo {author} {\bibfnamefont
  {A.}~\bibnamefont {Daley}}, \bibinfo {author} {\bibfnamefont
  {M.}~\bibnamefont {Cramer}},  \emph {et~al.},\ }\bibinfo {title} {\emph
  {Efficient tomography of a quantum many-body system}},\ \href@noop {}
  {\bibfield  {journal} {\bibinfo  {journal} {Nature Physics}\ }\textbf
  {\bibinfo {volume} {13}},\ \bibinfo {pages} {1158} (\bibinfo {year}
  {2017})}\BibitemShut {NoStop}%
\bibitem [{\citenamefont {Barends}\ \emph {et~al.}(2016)\citenamefont
  {Barends}, \citenamefont {Shabani}, \citenamefont {Lamata}, \citenamefont
  {Kelly}, \citenamefont {Mezzacapo}, \citenamefont {Las~Heras}, \citenamefont
  {Babbush}, \citenamefont {Fowler}, \citenamefont {Campbell}, \citenamefont
  {Chen} \emph {et~al.}}]{barends2016digitized}%
  \BibitemOpen
  \bibfield  {author} {\bibinfo {author} {\bibfnamefont {R.}~\bibnamefont
  {Barends}}, \bibinfo {author} {\bibfnamefont {A.}~\bibnamefont {Shabani}},
  \bibinfo {author} {\bibfnamefont {L.}~\bibnamefont {Lamata}}, \bibinfo
  {author} {\bibfnamefont {J.}~\bibnamefont {Kelly}}, \bibinfo {author}
  {\bibfnamefont {A.}~\bibnamefont {Mezzacapo}}, \bibinfo {author}
  {\bibfnamefont {U.}~\bibnamefont {Las~Heras}}, \bibinfo {author}
  {\bibfnamefont {R.}~\bibnamefont {Babbush}}, \bibinfo {author} {\bibfnamefont
  {A.}~\bibnamefont {Fowler}}, \bibinfo {author} {\bibfnamefont
  {B.}~\bibnamefont {Campbell}}, \bibinfo {author} {\bibfnamefont
  {Y.}~\bibnamefont {Chen}},  \emph {et~al.},\ }\bibinfo {title} {\emph
  {Digitized adiabatic quantum computing with a superconducting circuit}},\
  \href@noop {} {\bibfield  {journal} {\bibinfo  {journal} {Nature}\ }\textbf
  {\bibinfo {volume} {534}},\ \bibinfo {pages} {222} (\bibinfo {year}
  {2016})}\BibitemShut {NoStop}%
\bibitem [{\citenamefont {Salath\'e}\ \emph {et~al.}(2015)\citenamefont
  {Salath\'e}, \citenamefont {Mondal}, \citenamefont {Oppliger}, \citenamefont
  {Heinsoo}, \citenamefont {Kurpiers}, \citenamefont
  {Poto\ifmmode~\check{c}\else \v{c}\fi{}nik}, \citenamefont {Mezzacapo},
  \citenamefont {Las~Heras}, \citenamefont {Lamata}, \citenamefont {Solano},
  \citenamefont {Filipp},\ and\ \citenamefont {Wallraff}}]{Yves}%
  \BibitemOpen
  \bibfield  {author} {\bibinfo {author} {\bibfnamefont {Y.}~\bibnamefont
  {Salath\'e}}, \bibinfo {author} {\bibfnamefont {M.}~\bibnamefont {Mondal}},
  \bibinfo {author} {\bibfnamefont {M.}~\bibnamefont {Oppliger}}, \bibinfo
  {author} {\bibfnamefont {J.}~\bibnamefont {Heinsoo}}, \bibinfo {author}
  {\bibfnamefont {P.}~\bibnamefont {Kurpiers}}, \bibinfo {author}
  {\bibfnamefont {A.}~\bibnamefont {Poto\ifmmode~\check{c}\else
  \v{c}\fi{}nik}}, \bibinfo {author} {\bibfnamefont {A.}~\bibnamefont
  {Mezzacapo}}, \bibinfo {author} {\bibfnamefont {U.}~\bibnamefont
  {Las~Heras}}, \bibinfo {author} {\bibfnamefont {L.}~\bibnamefont {Lamata}},
  \bibinfo {author} {\bibfnamefont {E.}~\bibnamefont {Solano}}, \bibinfo
  {author} {\bibfnamefont {S.}~\bibnamefont {Filipp}}, \ and\ \bibinfo {author}
  {\bibfnamefont {A.}~\bibnamefont {Wallraff}},\ }\bibinfo {title} {\emph
  {Digital quantum simulation of spin models with circuit quantum
  electrodynamics}},\ \href@noop {} {\bibfield  {journal} {\bibinfo  {journal}
  {Phys. Rev. X}\ }\textbf {\bibinfo {volume} {5}},\ \bibinfo {pages} {021027}
  (\bibinfo {year} {2015})}\BibitemShut {NoStop}%
\bibitem [{\citenamefont {Pfeuty}(1970)}]{Pfeuty70}%
  \BibitemOpen
  \bibfield  {author} {\bibinfo {author} {\bibfnamefont {P.}~\bibnamefont
  {Pfeuty}},\ }\bibinfo {title} {\emph {The one-dimensional Ising model with a
  transverse field}},\ \href@noop {} {\bibfield  {journal} {\bibinfo  {journal}
  {Ann. Phys.}\ }\textbf {\bibinfo {volume} {57}},\ \bibinfo {pages} {79}
  (\bibinfo {year} {1970})}\BibitemShut {NoStop}%
\bibitem [{\citenamefont {Sachdev}(2007)}]{sachdev2007quantum}%
  \BibitemOpen
  \bibfield  {author} {\bibinfo {author} {\bibfnamefont {S.}~\bibnamefont
  {Sachdev}},\ }\bibinfo {title} {\emph {Quantum phase transitions}},\
  \href@noop {} {\bibfield  {journal} {\bibinfo  {journal} {Handbook of
  Magnetism and Advanced Magnetic Materials}\ } (\bibinfo {year}
  {2007})}\BibitemShut {NoStop}%
\bibitem [{\citenamefont {Kinross}\ \emph {et~al.}(2014)\citenamefont
  {Kinross}, \citenamefont {Fu}, \citenamefont {Munsie}, \citenamefont
  {Dabkowska}, \citenamefont {Luke}, \citenamefont {Sachdev},\ and\
  \citenamefont {Imai}}]{PhysRevX.4.031008}%
  \BibitemOpen
  \bibfield  {author} {\bibinfo {author} {\bibfnamefont {A.~W.}\ \bibnamefont
  {Kinross}}, \bibinfo {author} {\bibfnamefont {M.}~\bibnamefont {Fu}},
  \bibinfo {author} {\bibfnamefont {T.~J.}\ \bibnamefont {Munsie}}, \bibinfo
  {author} {\bibfnamefont {H.~A.}\ \bibnamefont {Dabkowska}}, \bibinfo {author}
  {\bibfnamefont {G.~M.}\ \bibnamefont {Luke}}, \bibinfo {author}
  {\bibfnamefont {S.}~\bibnamefont {Sachdev}}, \ and\ \bibinfo {author}
  {\bibfnamefont {T.}~\bibnamefont {Imai}},\ }\bibinfo {title} {\emph
  {Evolution of quantum fluctuations near the quantum critical point of the
  transverse field Ising chain system ${\mathrm{CoNb}}_{2}{\mathrm{O}}_{6}$}},\
  \href@noop {} {\bibfield  {journal} {\bibinfo  {journal} {Phys. Rev. X}\
  }\textbf {\bibinfo {volume} {4}},\ \bibinfo {pages} {031008} (\bibinfo {year}
  {2014})}\BibitemShut {NoStop}%
\bibitem [{\citenamefont {di~Francesco}\ \emph {et~al.}(2012)\citenamefont
  {di~Francesco}, \citenamefont {Mathieu},\ and\ \citenamefont
  {S{\'e}n{\'e}chal}}]{francesco2012conformal}%
  \BibitemOpen
  \bibfield  {author} {\bibinfo {author} {\bibfnamefont {P.}~\bibnamefont
  {di~Francesco}}, \bibinfo {author} {\bibfnamefont {P.}~\bibnamefont
  {Mathieu}}, \ and\ \bibinfo {author} {\bibfnamefont {D.}~\bibnamefont
  {S{\'e}n{\'e}chal}},\ }\href@noop {} {\emph {\bibinfo {title} {Conformal
  field theory}}}\ (\bibinfo  {publisher} {Springer Science \& Business
  Media},\ \bibinfo {year} {2012})\BibitemShut {NoStop}%
\bibitem [{\citenamefont {Kitaev}(2001)}]{kitaev2001unpaired}%
  \BibitemOpen
  \bibfield  {author} {\bibinfo {author} {\bibfnamefont {A.~Y.}\ \bibnamefont
  {Kitaev}},\ }\bibinfo {title} {\emph {Unpaired Majorana fermions in quantum
  wires}},\ \href@noop {} {\bibfield  {journal} {\bibinfo  {journal}
  {Physics-Uspekhi}\ }\textbf {\bibinfo {volume} {44}},\ \bibinfo {pages} {131}
  (\bibinfo {year} {2001})}\BibitemShut {NoStop}%
\bibitem [{\citenamefont {You}\ and\ \citenamefont
  {Xu}(2014)}]{you2014symmetry}%
  \BibitemOpen
  \bibfield  {author} {\bibinfo {author} {\bibfnamefont {Y.-Z.}\ \bibnamefont
  {You}}\ and\ \bibinfo {author} {\bibfnamefont {C.}~\bibnamefont {Xu}},\
  }\bibinfo {title} {\emph {Symmetry-protected topological states of
  interacting fermions and bosons}},\ \href@noop {} {\bibfield  {journal}
  {\bibinfo  {journal} {Phys. Rev. B}\ }\textbf {\bibinfo {volume} {90}},\
  \bibinfo {pages} {245120} (\bibinfo {year} {2014})}\BibitemShut {NoStop}%
\bibitem [{\citenamefont {Katsura}\ \emph {et~al.}(2015)\citenamefont
  {Katsura}, \citenamefont {Schuricht},\ and\ \citenamefont
  {Takahashi}}]{PhysRevB.92.115137}%
  \BibitemOpen
  \bibfield  {author} {\bibinfo {author} {\bibfnamefont {H.}~\bibnamefont
  {Katsura}}, \bibinfo {author} {\bibfnamefont {D.}~\bibnamefont {Schuricht}},
  \ and\ \bibinfo {author} {\bibfnamefont {M.}~\bibnamefont {Takahashi}},\
  }\bibinfo {title} {\emph {Exact ground states and topological order in
  interacting Kitaev/Majorana chains}},\ \href@noop {} {\bibfield  {journal}
  {\bibinfo  {journal} {Phys. Rev. B}\ }\textbf {\bibinfo {volume} {92}},\
  \bibinfo {pages} {115137} (\bibinfo {year} {2015})}\BibitemShut {NoStop}%
\bibitem [{\citenamefont {Pollmann}\ \emph {et~al.}(2012)\citenamefont
  {Pollmann}, \citenamefont {Berg}, \citenamefont {Turner},\ and\ \citenamefont
  {Oshikawa}}]{pollmann2012symmetry}%
  \BibitemOpen
  \bibfield  {author} {\bibinfo {author} {\bibfnamefont {F.}~\bibnamefont
  {Pollmann}}, \bibinfo {author} {\bibfnamefont {E.}~\bibnamefont {Berg}},
  \bibinfo {author} {\bibfnamefont {A.~M.}\ \bibnamefont {Turner}}, \ and\
  \bibinfo {author} {\bibfnamefont {M.}~\bibnamefont {Oshikawa}},\ }\bibinfo
  {title} {\emph {Symmetry protection of topological phases in one-dimensional
  quantum spin systems}},\ \href@noop {} {\bibfield  {journal} {\bibinfo
  {journal} {Phys. Rev. B}\ }\textbf {\bibinfo {volume} {85}},\ \bibinfo
  {pages} {075125} (\bibinfo {year} {2012})}\BibitemShut {NoStop}%
\bibitem [{\citenamefont {Grimm}(2002)}]{grimm2002spectrum}%
  \BibitemOpen
  \bibfield  {author} {\bibinfo {author} {\bibfnamefont {U.}~\bibnamefont
  {Grimm}},\ }\bibinfo {title} {\emph {Spectrum of a duality-twisted Ising
  quantum chain}},\ \href@noop {} {\bibfield  {journal} {\bibinfo  {journal}
  {J. Phys. A}\ }\textbf {\bibinfo {volume} {35}},\ \bibinfo {pages} {L25}
  (\bibinfo {year} {2002})}\BibitemShut {NoStop}%
\bibitem [{\citenamefont {Nishimori}\ and\ \citenamefont
  {Takada}(2016)}]{nishimori2016exponential}%
  \BibitemOpen
  \bibfield  {author} {\bibinfo {author} {\bibfnamefont {H.}~\bibnamefont
  {Nishimori}}\ and\ \bibinfo {author} {\bibfnamefont {K.}~\bibnamefont
  {Takada}},\ }\bibinfo {title} {\emph {Exponential enhancement of the
  efficiency of quantum annealing by non-stochastic Hamiltonians}},\ \href@noop
  {} {\bibfield  {journal} {\bibinfo  {journal} {arXiv:1609.03785}\ } (\bibinfo
  {year} {2016})}\BibitemShut {NoStop}%
\bibitem [{\citenamefont {Calabrese}\ \emph
  {et~al.}(2012{\natexlab{a}})\citenamefont {Calabrese}, \citenamefont
  {Essler},\ and\ \citenamefont {Fagotti}}]{calabrese2012quantum}%
  \BibitemOpen
  \bibfield  {author} {\bibinfo {author} {\bibfnamefont {P.}~\bibnamefont
  {Calabrese}}, \bibinfo {author} {\bibfnamefont {F.~H.}\ \bibnamefont
  {Essler}}, \ and\ \bibinfo {author} {\bibfnamefont {M.}~\bibnamefont
  {Fagotti}},\ }\bibinfo {title} {\emph {Quantum quench in the transverse field
  Ising chain: I. Time evolution of order parameter correlators}},\ \href@noop
  {} {\bibfield  {journal} {\bibinfo  {journal} {J. Stat. Mech.}\ }\textbf
  {\bibinfo {volume} {2012}},\ \bibinfo {pages} {P07016} (\bibinfo {year}
  {2012}{\natexlab{a}})}\BibitemShut {NoStop}%
\bibitem [{\citenamefont {Calabrese}\ \emph
  {et~al.}(2012{\natexlab{b}})\citenamefont {Calabrese}, \citenamefont
  {Essler},\ and\ \citenamefont {Fagotti}}]{calabrese2012quantumII}%
  \BibitemOpen
  \bibfield  {author} {\bibinfo {author} {\bibfnamefont {P.}~\bibnamefont
  {Calabrese}}, \bibinfo {author} {\bibfnamefont {F.~H.}\ \bibnamefont
  {Essler}}, \ and\ \bibinfo {author} {\bibfnamefont {M.}~\bibnamefont
  {Fagotti}},\ }\bibinfo {title} {\emph {Quantum quenches in the transverse
  field Ising chain: II. Stationary state properties}},\ \href@noop {}
  {\bibfield  {journal} {\bibinfo  {journal} {J. Stat. Mech.}\ }\textbf
  {\bibinfo {volume} {2012}},\ \bibinfo {pages} {P07022} (\bibinfo {year}
  {2012}{\natexlab{b}})}\BibitemShut {NoStop}%
\bibitem [{\citenamefont {Calabrese}\ and\ \citenamefont
  {Cardy}(2004)}]{calabrese2004entanglement}%
  \BibitemOpen
  \bibfield  {author} {\bibinfo {author} {\bibfnamefont {P.}~\bibnamefont
  {Calabrese}}\ and\ \bibinfo {author} {\bibfnamefont {J.}~\bibnamefont
  {Cardy}},\ }\bibinfo {title} {\emph {Entanglement entropy and quantum field
  theory}},\ \href@noop {} {\bibfield  {journal} {\bibinfo  {journal} {J. Stat.
  Mech.}\ }\textbf {\bibinfo {volume} {2004}},\ \bibinfo {pages} {P06002}
  (\bibinfo {year} {2004})}\BibitemShut {NoStop}%
\bibitem [{\citenamefont {Dziarmaga}(2005)}]{dziarmaga2005dynamics}%
  \BibitemOpen
  \bibfield  {author} {\bibinfo {author} {\bibfnamefont {J.}~\bibnamefont
  {Dziarmaga}},\ }\bibinfo {title} {\emph {Dynamics of a quantum phase
  transition: Exact solution of the quantum Ising model}},\ \href@noop {}
  {\bibfield  {journal} {\bibinfo  {journal} {Phys. Rev. Lett.}\ }\textbf
  {\bibinfo {volume} {95}},\ \bibinfo {pages} {245701} (\bibinfo {year}
  {2005})}\BibitemShut {NoStop}%
\bibitem [{\citenamefont {Gluza}\ \emph {et~al.}(2016)\citenamefont {Gluza},
  \citenamefont {Krumnow}, \citenamefont {Friesdorf}, \citenamefont {Gogolin},\
  and\ \citenamefont {Eisert}}]{Gaussification}%
  \BibitemOpen
  \bibfield  {author} {\bibinfo {author} {\bibfnamefont {M.}~\bibnamefont
  {Gluza}}, \bibinfo {author} {\bibfnamefont {C.}~\bibnamefont {Krumnow}},
  \bibinfo {author} {\bibfnamefont {M.}~\bibnamefont {Friesdorf}}, \bibinfo
  {author} {\bibfnamefont {C.}~\bibnamefont {Gogolin}}, \ and\ \bibinfo
  {author} {\bibfnamefont {J.}~\bibnamefont {Eisert}},\ }\bibinfo {title}
  {\emph {Equilibration via Gaussification in fermionic lattice systems}},\
  \href@noop {} {\bibfield  {journal} {\bibinfo  {journal} {Phys. Rev. Lett.}\
  }\textbf {\bibinfo {volume} {117}},\ \bibinfo {pages} {190602} (\bibinfo
  {year} {2016})}\BibitemShut {NoStop}%
\bibitem [{\citenamefont {Schneider}\ \emph
  {et~al.}(2012{\natexlab{b}})\citenamefont {Schneider}, \citenamefont
  {Hackerm{\"u}ller}, \citenamefont {Ronzheimer}, \citenamefont {Will},
  \citenamefont {S.~Braun}, \citenamefont {Bloch}, \citenamefont {Demler},
  \citenamefont {Mandt}, \citenamefont {Rasch},\ and\ \citenamefont
  {Rosch}}]{RoschTransport}%
  \BibitemOpen
  \bibfield  {author} {\bibinfo {author} {\bibfnamefont {U.}~\bibnamefont
  {Schneider}}, \bibinfo {author} {\bibfnamefont {L.}~\bibnamefont
  {Hackerm{\"u}ller}}, \bibinfo {author} {\bibfnamefont {J.~P.}\ \bibnamefont
  {Ronzheimer}}, \bibinfo {author} {\bibfnamefont {S.}~\bibnamefont {Will}},
  \bibinfo {author} {\bibfnamefont {T.~B.}\ \bibnamefont {S.~Braun}}, \bibinfo
  {author} {\bibfnamefont {I.}~\bibnamefont {Bloch}}, \bibinfo {author}
  {\bibfnamefont {E.}~\bibnamefont {Demler}}, \bibinfo {author} {\bibfnamefont
  {S.}~\bibnamefont {Mandt}}, \bibinfo {author} {\bibfnamefont
  {D.}~\bibnamefont {Rasch}}, \ and\ \bibinfo {author} {\bibfnamefont
  {A.}~\bibnamefont {Rosch}},\ }\bibinfo {title} {\emph {Fermionic transport
  and out-of-equilibrium dynamics in a homogeneous Hubbard model with ultracold
  atoms}},\ \href@noop {} {\bibfield  {journal} {\bibinfo  {journal} {Nature
  Phys.}\ }\textbf {\bibinfo {volume} {8}},\ \bibinfo {pages} {213} (\bibinfo
  {year} {2012}{\natexlab{b}})}\BibitemShut {NoStop}%
\bibitem [{\citenamefont {Knill}\ \emph {et~al.}(2001)\citenamefont {Knill},
  \citenamefont {Laflamme},\ and\ \citenamefont
  {Milburn}}]{knill2000efficient}%
  \BibitemOpen
  \bibfield  {author} {\bibinfo {author} {\bibfnamefont {E.}~\bibnamefont
  {Knill}}, \bibinfo {author} {\bibfnamefont {R.}~\bibnamefont {Laflamme}}, \
  and\ \bibinfo {author} {\bibfnamefont {G.}~\bibnamefont {Milburn}},\
  }\bibinfo {title} {\emph {A scheme for efficient quantum computation with
  linear optics}},\ \href@noop {} {\bibfield  {journal} {\bibinfo  {journal}
  {Nature}\ }\textbf {\bibinfo {volume} {409}},\ \bibinfo {pages} {46}
  (\bibinfo {year} {2001})}\BibitemShut {NoStop}%
\bibitem [{\citenamefont {Knill}(2001)}]{knill2001fermionic}%
  \BibitemOpen
  \bibfield  {author} {\bibinfo {author} {\bibfnamefont {E.}~\bibnamefont
  {Knill}},\ }\bibinfo {title} {\emph {Fermionic linear optics and
  matchgates}},\ \href@noop {} {\bibfield  {journal} {\bibinfo  {journal}
  {quant-ph/0108033}\ } (\bibinfo {year} {2001})}\BibitemShut {NoStop}%
\bibitem [{\citenamefont {Terhal}\ and\ \citenamefont
  {DiVincenzo}(2002)}]{terhal2002classical}%
  \BibitemOpen
  \bibfield  {author} {\bibinfo {author} {\bibfnamefont {B.~M.}\ \bibnamefont
  {Terhal}}\ and\ \bibinfo {author} {\bibfnamefont {D.~P.}\ \bibnamefont
  {DiVincenzo}},\ }\bibinfo {title} {\emph {Classical simulation of
  non-interacting-fermion quantum circuits}},\ \href@noop {} {\bibfield
  {journal} {\bibinfo  {journal} {Phys. Rev. A}\ }\textbf {\bibinfo {volume}
  {65}},\ \bibinfo {pages} {032325} (\bibinfo {year} {2002})}\BibitemShut
  {NoStop}%
\bibitem [{\citenamefont {Bravyi}(2005)}]{bravyi2004lagrangian}%
  \BibitemOpen
  \bibfield  {author} {\bibinfo {author} {\bibfnamefont {S.}~\bibnamefont
  {Bravyi}},\ }\bibinfo {title} {\emph {Lagrangian representation for fermionic
  linear optics}},\ \href@noop {} {\bibfield  {journal} {\bibinfo  {journal}
  {Quantum Inf. and Comp.}\ }\textbf {\bibinfo {volume} {5}},\ \bibinfo {pages}
  {216} (\bibinfo {year} {2005})}\BibitemShut {NoStop}%
\bibitem [{\citenamefont {Kitaev}(2006)}]{kitaev2006anyons}%
  \BibitemOpen
  \bibfield  {author} {\bibinfo {author} {\bibfnamefont {A.}~\bibnamefont
  {Kitaev}},\ }\bibinfo {title} {\emph {Anyons in an exactly solved model and
  beyond}},\ \href@noop {} {\bibfield  {journal} {\bibinfo  {journal} {Ann.
  Phys.}\ }\textbf {\bibinfo {volume} {321}},\ \bibinfo {pages} {2} (\bibinfo
  {year} {2006})}\BibitemShut {NoStop}%
\bibitem [{\citenamefont {de~Melo}\ \emph {et~al.}(2013)\citenamefont
  {de~Melo}, \citenamefont {\'{C}wikli\'{n}ski},\ and\ \citenamefont
  {Terhal}}]{Melo13}%
  \BibitemOpen
  \bibfield  {author} {\bibinfo {author} {\bibfnamefont {F.}~\bibnamefont
  {de~Melo}}, \bibinfo {author} {\bibfnamefont {P.}~\bibnamefont
  {\'{C}wikli\'{n}ski}}, \ and\ \bibinfo {author} {\bibfnamefont {B.~M.}\
  \bibnamefont {Terhal}},\ }\bibinfo {title} {\emph {The power of noisy
  fermionic quantum computation}},\ \href@noop {} {\bibfield  {journal}
  {\bibinfo  {journal} {New J. Phys.}\ }\textbf {\bibinfo {volume} {15}},\
  \bibinfo {pages} {013015} (\bibinfo {year} {2013})}\BibitemShut {NoStop}%
\bibitem [{\citenamefont {{Bravyi}}(2005)}]{bravyicapacity}%
  \BibitemOpen
  \bibfield  {author} {\bibinfo {author} {\bibfnamefont {S.}~\bibnamefont
  {{Bravyi}}},\ }\bibinfo {title} {\emph {{Classical capacity of fermionic
  product channels}}},\ \href@noop {} {\bibfield  {journal} {\bibinfo
  {journal} {quant-ph/0507282}\ } (\bibinfo {year} {2005})}\BibitemShut
  {NoStop}%
\bibitem [{\citenamefont {Ferraro}\ \emph {et~al.}(2005)\citenamefont
  {Ferraro}, \citenamefont {Olivares},\ and\ \citenamefont {Paris}}]{Ferra05}%
  \BibitemOpen
  \bibfield  {author} {\bibinfo {author} {\bibfnamefont {A.}~\bibnamefont
  {Ferraro}}, \bibinfo {author} {\bibfnamefont {S.}~\bibnamefont {Olivares}}, \
  and\ \bibinfo {author} {\bibfnamefont {M.~G.~A.}\ \bibnamefont {Paris}},\
  }\bibinfo {title} {\emph {Gaussian states in continuous variable quantum
  information}},\ \href@noop {} {\bibfield  {journal} {\bibinfo  {journal}
  {arXiv:quant-ph/0503237}\ } (\bibinfo {year} {2005})}\BibitemShut {NoStop}%
\bibitem [{\citenamefont {Weedbrook}\ \emph {et~al.}(2012)\citenamefont
  {Weedbrook}, \citenamefont {Pirandola}, \citenamefont
  {Garc\'{i}a-Patr\'{o}n}, \citenamefont {Cerf}, \citenamefont {Ralph},
  \citenamefont {Shapiro},\ and\ \citenamefont {Lloyd}}]{Weed12}%
  \BibitemOpen
  \bibfield  {author} {\bibinfo {author} {\bibfnamefont {C.}~\bibnamefont
  {Weedbrook}}, \bibinfo {author} {\bibfnamefont {S.}~\bibnamefont
  {Pirandola}}, \bibinfo {author} {\bibfnamefont {R.}~\bibnamefont
  {Garc\'{i}a-Patr\'{o}n}}, \bibinfo {author} {\bibfnamefont {N.~J.}\
  \bibnamefont {Cerf}}, \bibinfo {author} {\bibfnamefont {T.~C.}\ \bibnamefont
  {Ralph}}, \bibinfo {author} {\bibfnamefont {J.~H.}\ \bibnamefont {Shapiro}},
  \ and\ \bibinfo {author} {\bibfnamefont {S.}~\bibnamefont {Lloyd}},\
  }\bibinfo {title} {\emph {Gaussian quantum information}},\ \href@noop {}
  {\bibfield  {journal} {\bibinfo  {journal} {Rev. Mod. Phys.}\ }\textbf
  {\bibinfo {volume} {84}},\ \bibinfo {pages} {621} (\bibinfo {year}
  {2012})}\BibitemShut {NoStop}%
\bibitem [{\citenamefont {Haldane}(1981)}]{haldane1981luttinger}%
  \BibitemOpen
  \bibfield  {author} {\bibinfo {author} {\bibfnamefont {F.}~\bibnamefont
  {Haldane}},\ }\bibinfo {title} {\emph {'Luttinger liquid theory'of
  one-dimensional quantum fluids. I. Properties of the Luttinger model and
  their extension to the general 1D interacting spinless Fermi gas}},\
  \href@noop {} {\bibfield  {journal} {\bibinfo  {journal} {J. Phys. C}\
  }\textbf {\bibinfo {volume} {14}},\ \bibinfo {pages} {2585} (\bibinfo {year}
  {1981})}\BibitemShut {NoStop}%
\bibitem [{\citenamefont {von Delft~Jan}\ and\ \citenamefont
  {Herbert}(1998)}]{vonDelft_bosonization}%
  \BibitemOpen
  \bibfield  {author} {\bibinfo {author} {\bibnamefont {von Delft~Jan}}\ and\
  \bibinfo {author} {\bibfnamefont {S.}~\bibnamefont {Herbert}},\ }\bibinfo
  {title} {\emph {Bosonization for beginners — refermionization for
  experts}},\ \href@noop {} {\bibfield  {journal} {\bibinfo  {journal} {Ann.
  Phys.}\ }\textbf {\bibinfo {volume} {7}},\ \bibinfo {pages} {225} (\bibinfo
  {year} {1998})}\BibitemShut {NoStop}%
\bibitem [{\citenamefont {Fradkin}(2013)}]{fradkin2013field}%
  \BibitemOpen
  \bibfield  {author} {\bibinfo {author} {\bibfnamefont {E.}~\bibnamefont
  {Fradkin}},\ }\href@noop {} {\emph {\bibinfo {title} {Field theories of
  condensed matter physics}}}\ (\bibinfo  {publisher} {Cambridge University
  Press},\ \bibinfo {year} {2013})\BibitemShut {NoStop}%
\bibitem [{\citenamefont {Gross}\ \emph {et~al.}(2010)\citenamefont {Gross},
  \citenamefont {Liu}, \citenamefont {Flammia}, \citenamefont {Becker},\ and\
  \citenamefont {Eisert}}]{Gross10}%
  \BibitemOpen
  \bibfield  {author} {\bibinfo {author} {\bibfnamefont {D.}~\bibnamefont
  {Gross}}, \bibinfo {author} {\bibfnamefont {Y.-K.}\ \bibnamefont {Liu}},
  \bibinfo {author} {\bibfnamefont {S.~T.}\ \bibnamefont {Flammia}}, \bibinfo
  {author} {\bibfnamefont {S.}~\bibnamefont {Becker}}, \ and\ \bibinfo {author}
  {\bibfnamefont {J.}~\bibnamefont {Eisert}},\ }\bibinfo {title} {\emph
  {Quantum state tomography via compressed sensing}},\ \href@noop {} {\bibfield
   {journal} {\bibinfo  {journal} {Phys. Rev. Lett.}\ }\textbf {\bibinfo
  {volume} {105}},\ \bibinfo {pages} {150401} (\bibinfo {year}
  {2010})}\BibitemShut {NoStop}%
\bibitem [{\citenamefont {Riofrio}\ \emph {et~al.}(2017)\citenamefont
  {Riofrio}, \citenamefont {Gross}, \citenamefont {Flammia}, \citenamefont
  {Monz}, \citenamefont {Nigg}, \citenamefont {Blatt},\ and\ \citenamefont
  {Eisert}}]{ExperimentalCS}%
  \BibitemOpen
  \bibfield  {author} {\bibinfo {author} {\bibfnamefont {C.~A.}\ \bibnamefont
  {Riofrio}}, \bibinfo {author} {\bibfnamefont {D.}~\bibnamefont {Gross}},
  \bibinfo {author} {\bibfnamefont {S.~T.}\ \bibnamefont {Flammia}}, \bibinfo
  {author} {\bibfnamefont {T.}~\bibnamefont {Monz}}, \bibinfo {author}
  {\bibfnamefont {D.}~\bibnamefont {Nigg}}, \bibinfo {author} {\bibfnamefont
  {R.}~\bibnamefont {Blatt}}, \ and\ \bibinfo {author} {\bibfnamefont
  {J.}~\bibnamefont {Eisert}},\ }\bibinfo {title} {\emph {Experimental quantum
  compressed sensing for a seven-qubit system}},\ \href@noop {} {\bibfield
  {journal} {\bibinfo  {journal} {Nature Comm.}\ }\textbf {\bibinfo {volume}
  {8}},\ \bibinfo {pages} {15305} (\bibinfo {year} {2017})}\BibitemShut
  {NoStop}%
\bibitem [{\citenamefont {T\'{o}th}\ \emph {et~al.}(2010)\citenamefont
  {T\'{o}th}, \citenamefont {Wieczorek}, \citenamefont {Gross}, \citenamefont
  {Krischek}, \citenamefont {Schwemmer},\ and\ \citenamefont
  {Weinfurter}}]{Toth10}%
  \BibitemOpen
  \bibfield  {author} {\bibinfo {author} {\bibfnamefont {G.}~\bibnamefont
  {T\'{o}th}}, \bibinfo {author} {\bibfnamefont {W.}~\bibnamefont {Wieczorek}},
  \bibinfo {author} {\bibfnamefont {D.}~\bibnamefont {Gross}}, \bibinfo
  {author} {\bibfnamefont {R.}~\bibnamefont {Krischek}}, \bibinfo {author}
  {\bibfnamefont {C.}~\bibnamefont {Schwemmer}}, \ and\ \bibinfo {author}
  {\bibfnamefont {H.}~\bibnamefont {Weinfurter}},\ }\bibinfo {title} {\emph
  {Permutationally invariant quantum tomography}},\ \href@noop {} {\bibfield
  {journal} {\bibinfo  {journal} {Phys. Rev. Lett.}\ }\textbf {\bibinfo
  {volume} {105}},\ \bibinfo {pages} {250403} (\bibinfo {year}
  {2010})}\BibitemShut {NoStop}%
\bibitem [{\citenamefont {Cramer}\ \emph {et~al.}(2010)\citenamefont {Cramer},
  \citenamefont {Plenio}, \citenamefont {Flammia}, \citenamefont {Somma},
  \citenamefont {Gross}, \citenamefont {Bartlett}, \citenamefont
  {Landon-Cardinal}, \citenamefont {Poulin},\ and\ \citenamefont
  {Liu}}]{cramer2010efficient}%
  \BibitemOpen
  \bibfield  {author} {\bibinfo {author} {\bibfnamefont {M.}~\bibnamefont
  {Cramer}}, \bibinfo {author} {\bibfnamefont {M.~B.}\ \bibnamefont {Plenio}},
  \bibinfo {author} {\bibfnamefont {S.~T.}\ \bibnamefont {Flammia}}, \bibinfo
  {author} {\bibfnamefont {R.}~\bibnamefont {Somma}}, \bibinfo {author}
  {\bibfnamefont {D.}~\bibnamefont {Gross}}, \bibinfo {author} {\bibfnamefont
  {S.~D.}\ \bibnamefont {Bartlett}}, \bibinfo {author} {\bibfnamefont
  {O.}~\bibnamefont {Landon-Cardinal}}, \bibinfo {author} {\bibfnamefont
  {D.}~\bibnamefont {Poulin}}, \ and\ \bibinfo {author} {\bibfnamefont {Y.-K.}\
  \bibnamefont {Liu}},\ }\bibinfo {title} {\emph {Efficient quantum state
  tomography}},\ \href@noop {} {\bibfield  {journal} {\bibinfo  {journal}
  {Nature Comm.}\ }\textbf {\bibinfo {volume} {1}},\ \bibinfo {pages} {149}
  (\bibinfo {year} {2010})}\BibitemShut {NoStop}%
\bibitem [{\citenamefont {Flammia}\ and\ \citenamefont
  {Liu}(2011)}]{flammia2011direct}%
  \BibitemOpen
  \bibfield  {author} {\bibinfo {author} {\bibfnamefont {S.~T.}\ \bibnamefont
  {Flammia}}\ and\ \bibinfo {author} {\bibfnamefont {Y.-K.}\ \bibnamefont
  {Liu}},\ }\bibinfo {title} {\emph {Direct fidelity estimation from few Pauli
  measurements}},\ \href@noop {} {\bibfield  {journal} {\bibinfo  {journal}
  {Phys. Rev. Lett.}\ }\textbf {\bibinfo {volume} {106}},\ \bibinfo {pages}
  {230501} (\bibinfo {year} {2011})}\BibitemShut {NoStop}%
\bibitem [{\citenamefont {da~Silva}\ \emph {et~al.}(2011)\citenamefont
  {da~Silva}, \citenamefont {Landon-Cardinal},\ and\ \citenamefont
  {Poulin}}]{SilLanPou11}%
  \BibitemOpen
  \bibfield  {author} {\bibinfo {author} {\bibfnamefont {M.~P.}\ \bibnamefont
  {da~Silva}}, \bibinfo {author} {\bibfnamefont {O.}~\bibnamefont
  {Landon-Cardinal}}, \ and\ \bibinfo {author} {\bibfnamefont {D.}~\bibnamefont
  {Poulin}},\ }\bibinfo {title} {\emph {Practical characterization of quantum
  devices without tomography}},\ \href@noop {} {\bibfield  {journal} {\bibinfo
  {journal} {Phys. Rev. Lett.}\ }\textbf {\bibinfo {volume} {107}},\ \bibinfo
  {pages} {210404} (\bibinfo {year} {2011})}\BibitemShut {NoStop}%
\bibitem [{\citenamefont {Flammia}\ \emph {et~al.}(2014)\citenamefont
  {Flammia}, \citenamefont {Gross}, \citenamefont {Liu},\ and\ \citenamefont
  {Eisert}}]{Flammia12}%
  \BibitemOpen
  \bibfield  {author} {\bibinfo {author} {\bibfnamefont {S.~T.}\ \bibnamefont
  {Flammia}}, \bibinfo {author} {\bibfnamefont {D.}~\bibnamefont {Gross}},
  \bibinfo {author} {\bibfnamefont {Y.-K.}\ \bibnamefont {Liu}}, \ and\
  \bibinfo {author} {\bibfnamefont {J.}~\bibnamefont {Eisert}},\ }\bibinfo
  {title} {\emph {Quantum tomography via compressed sensing: {E}rror bounds,
  sample complexity, and efficient estimators}},\ \href@noop {} {\bibfield
  {journal} {\bibinfo  {journal} {New J. Phys.}\ }\textbf {\bibinfo {volume}
  {14}},\ \bibinfo {pages} {095022} (\bibinfo {year} {2014})}\BibitemShut
  {NoStop}%
\bibitem [{\citenamefont {Aolita}\ \emph {et~al.}(2015)\citenamefont {Aolita},
  \citenamefont {Gogolin}, \citenamefont {Kliesch},\ and\ \citenamefont
  {Eisert}}]{aolita2015reliable}%
  \BibitemOpen
  \bibfield  {author} {\bibinfo {author} {\bibfnamefont {L.}~\bibnamefont
  {Aolita}}, \bibinfo {author} {\bibfnamefont {C.}~\bibnamefont {Gogolin}},
  \bibinfo {author} {\bibfnamefont {M.}~\bibnamefont {Kliesch}}, \ and\
  \bibinfo {author} {\bibfnamefont {J.}~\bibnamefont {Eisert}},\ }\bibinfo
  {title} {\emph {Reliable quantum certification of photonic state
  preparations}},\ \href@noop {} {\bibfield  {journal} {\bibinfo  {journal}
  {Nature Comm.}\ }\textbf {\bibinfo {volume} {6}},\ \bibinfo {pages} {8498}
  (\bibinfo {year} {2015})}\BibitemShut {NoStop}%
\bibitem [{\citenamefont {Steffens}\ \emph {et~al.}(2015)\citenamefont
  {Steffens}, \citenamefont {Friesdorf}, \citenamefont {Langen}, \citenamefont
  {Rauer}, \citenamefont {Schweigler}, \citenamefont {H{\"u}bener},
  \citenamefont {Schmiedmayer}, \citenamefont {Riofr{\'\i}o},\ and\
  \citenamefont {Eisert}}]{steffens2015towards}%
  \BibitemOpen
  \bibfield  {author} {\bibinfo {author} {\bibfnamefont {A.}~\bibnamefont
  {Steffens}}, \bibinfo {author} {\bibfnamefont {M.}~\bibnamefont {Friesdorf}},
  \bibinfo {author} {\bibfnamefont {T.}~\bibnamefont {Langen}}, \bibinfo
  {author} {\bibfnamefont {B.}~\bibnamefont {Rauer}}, \bibinfo {author}
  {\bibfnamefont {T.}~\bibnamefont {Schweigler}}, \bibinfo {author}
  {\bibfnamefont {R.}~\bibnamefont {H{\"u}bener}}, \bibinfo {author}
  {\bibfnamefont {J.}~\bibnamefont {Schmiedmayer}}, \bibinfo {author}
  {\bibfnamefont {C.}~\bibnamefont {Riofr{\'\i}o}}, \ and\ \bibinfo {author}
  {\bibfnamefont {J.}~\bibnamefont {Eisert}},\ }\bibinfo {title} {\emph
  {Towards experimental quantum-field tomography with ultracold atoms}},\
  \href@noop {} {\bibfield  {journal} {\bibinfo  {journal} {Nature Comm.}\
  }\textbf {\bibinfo {volume} {6}} (\bibinfo {year} {2015})}\BibitemShut
  {NoStop}%
\bibitem [{\citenamefont {{Hangleiter}}\ \emph {et~al.}(2017)\citenamefont
  {{Hangleiter}}, \citenamefont {{Kliesch}}, \citenamefont {{Schwarz}},\ and\
  \citenamefont {{Eisert}}}]{Hangleiter16}%
  \BibitemOpen
  \bibfield  {author} {\bibinfo {author} {\bibfnamefont {D.}~\bibnamefont
  {{Hangleiter}}}, \bibinfo {author} {\bibfnamefont {M.}~\bibnamefont
  {{Kliesch}}}, \bibinfo {author} {\bibfnamefont {M.}~\bibnamefont
  {{Schwarz}}}, \ and\ \bibinfo {author} {\bibfnamefont {J.}~\bibnamefont
  {{Eisert}}},\ }\bibinfo {title} {\emph {Direct certification of a class of
  quantum simulations}},\ \href@noop {} {\bibfield  {journal} {\bibinfo
  {journal} {Quantum Sci. Technol.}\ }\textbf {\bibinfo {volume} {2}},\
  \bibinfo {pages} {015004} (\bibinfo {year} {2017})}\BibitemShut {NoStop}%
\bibitem [{\citenamefont {Eisert}\ and\ \citenamefont
  {Osborne}(2006)}]{eisert2006general}%
  \BibitemOpen
  \bibfield  {author} {\bibinfo {author} {\bibfnamefont {J.}~\bibnamefont
  {Eisert}}\ and\ \bibinfo {author} {\bibfnamefont {T.~J.}\ \bibnamefont
  {Osborne}},\ }\bibinfo {title} {\emph {General entanglement scaling laws from
  time evolution}},\ \href@noop {} {\bibfield  {journal} {\bibinfo  {journal}
  {Phys. Rev. Lett.}\ }\textbf {\bibinfo {volume} {97}},\ \bibinfo {pages}
  {150404} (\bibinfo {year} {2006})}\BibitemShut {NoStop}%
\bibitem [{\citenamefont {Ram{\'\i}rez}\ \emph {et~al.}(2015)\citenamefont
  {Ram{\'\i}rez}, \citenamefont {Rodr{\'\i}guez-Laguna},\ and\ \citenamefont
  {Sierra}}]{ramirez2015entanglement}%
  \BibitemOpen
  \bibfield  {author} {\bibinfo {author} {\bibfnamefont {G.}~\bibnamefont
  {Ram{\'\i}rez}}, \bibinfo {author} {\bibfnamefont {J.}~\bibnamefont
  {Rodr{\'\i}guez-Laguna}}, \ and\ \bibinfo {author} {\bibfnamefont
  {G.}~\bibnamefont {Sierra}},\ }\bibinfo {title} {\emph {Entanglement over the
  rainbow}},\ \href@noop {} {\bibfield  {journal} {\bibinfo  {journal} {J.
  Stat. Mech.}\ }\textbf {\bibinfo {volume} {2015}},\ \bibinfo {pages} {P06002}
  (\bibinfo {year} {2015})}\BibitemShut {NoStop}%
\bibitem [{\citenamefont {Eisler}\ and\ \citenamefont
  {Zimbor{\'a}s}(2016)}]{eisler2016entanglement}%
  \BibitemOpen
  \bibfield  {author} {\bibinfo {author} {\bibfnamefont {V.}~\bibnamefont
  {Eisler}}\ and\ \bibinfo {author} {\bibfnamefont {Z.}~\bibnamefont
  {Zimbor{\'a}s}},\ }\bibinfo {title} {\emph {Entanglement negativity in
  two-dimensional free lattice models}},\ \href@noop {} {\bibfield  {journal}
  {\bibinfo  {journal} {Phys. Rev. B}\ }\textbf {\bibinfo {volume} {93}},\
  \bibinfo {pages} {115148} (\bibinfo {year} {2016})}\BibitemShut {NoStop}%
\bibitem [{\citenamefont {Kraus}\ and\ \citenamefont
  {Cirac}(2010)}]{kraus2010generalized}%
  \BibitemOpen
  \bibfield  {author} {\bibinfo {author} {\bibfnamefont {C.~V.}\ \bibnamefont
  {Kraus}}\ and\ \bibinfo {author} {\bibfnamefont {J.~I.}\ \bibnamefont
  {Cirac}},\ }\bibinfo {title} {\emph {Generalized Hartree--Fock theory for
  interacting fermions in lattices: numerical methods}},\ \href@noop {}
  {\bibfield  {journal} {\bibinfo  {journal} {New J. Phys.}\ }\textbf {\bibinfo
  {volume} {12}},\ \bibinfo {pages} {113004} (\bibinfo {year}
  {2010})}\BibitemShut {NoStop}%
\bibitem [{\citenamefont {G{\"{u}}hne}\ and\ \citenamefont
  {T{\'{o}}th}(2009)}]{guehne09}%
  \BibitemOpen
  \bibfield  {author} {\bibinfo {author} {\bibfnamefont {O.}~\bibnamefont
  {G{\"{u}}hne}}\ and\ \bibinfo {author} {\bibfnamefont {G.}~\bibnamefont
  {T{\'{o}}th}},\ }\bibinfo {title} {\emph {Entanglement detection}},\
  \href@noop {} {\bibfield  {journal} {\bibinfo  {journal} {Phys. Rep.}\
  }\textbf {\bibinfo {volume} {474}},\ \bibinfo {pages} {1} (\bibinfo {year}
  {2009})}\BibitemShut {NoStop}%
\bibitem [{\citenamefont {Fr{\"o}wis}\ \emph {et~al.}(2013)\citenamefont
  {Fr{\"o}wis}, \citenamefont {van~den Nest},\ and\ \citenamefont
  {D{\"u}r}}]{vanderNest}%
  \BibitemOpen
  \bibfield  {author} {\bibinfo {author} {\bibfnamefont {F.}~\bibnamefont
  {Fr{\"o}wis}}, \bibinfo {author} {\bibfnamefont {M.}~\bibnamefont {van~den
  Nest}}, \ and\ \bibinfo {author} {\bibfnamefont {W.}~\bibnamefont
  {D{\"u}r}},\ }\bibinfo {title} {\emph {Certifiability criterion for
  large-scale quantum system}},\ \href
  {http://stacks.iop.org/1367-2630/15/i=11/a=113011} {\bibfield  {journal}
  {\bibinfo  {journal} {New J. Phys.}\ }\textbf {\bibinfo {volume} {15}},\
  \bibinfo {pages} {113011} (\bibinfo {year} {2013})}\BibitemShut {NoStop}%
\bibitem [{\citenamefont {Gogolin}\ \emph {et~al.}(2013)\citenamefont
  {Gogolin}, \citenamefont {Kliesch}, \citenamefont {Aolita},\ and\
  \citenamefont {Eisert}}]{GogKliAol13}%
  \BibitemOpen
  \bibfield  {author} {\bibinfo {author} {\bibfnamefont {C.}~\bibnamefont
  {Gogolin}}, \bibinfo {author} {\bibfnamefont {M.}~\bibnamefont {Kliesch}},
  \bibinfo {author} {\bibfnamefont {L.}~\bibnamefont {Aolita}}, \ and\ \bibinfo
  {author} {\bibfnamefont {J.}~\bibnamefont {Eisert}},\ }\bibinfo {title}
  {\emph {Boson-sampling in the light of sample complexity}},\ \href@noop {}
  {\bibfield  {journal} {\bibinfo  {journal} {arxiv e-prints
  ArXiv:1306.3995v2}\ } (\bibinfo {year} {2013})}\BibitemShut {NoStop}%
\bibitem [{\citenamefont {Hastings}\ and\ \citenamefont
  {Koma}(2006)}]{hastings2006spectral}%
  \BibitemOpen
  \bibfield  {author} {\bibinfo {author} {\bibfnamefont {M.~B.}\ \bibnamefont
  {Hastings}}\ and\ \bibinfo {author} {\bibfnamefont {T.}~\bibnamefont
  {Koma}},\ }\bibinfo {title} {\emph {Spectral gap and exponential decay of
  correlations}},\ \href@noop {} {\bibfield  {journal} {\bibinfo  {journal}
  {Commun. Math. Phys.}\ }\textbf {\bibinfo {volume} {265}},\ \bibinfo {pages}
  {781} (\bibinfo {year} {2006})}\BibitemShut {NoStop}%
\bibitem [{\citenamefont {Jordan}\ and\ \citenamefont
  {Eugene}(1928)}]{JordanWigner}%
  \BibitemOpen
  \bibfield  {author} {\bibinfo {author} {\bibfnamefont {P.}~\bibnamefont
  {Jordan}}\ and\ \bibinfo {author} {\bibfnamefont {W.}~\bibnamefont
  {Eugene}},\ }\bibinfo {title} {\emph {{\"U}ber das Paulische
  {\"A}quivalenzverbot}},\ \href@noop {} {\bibfield  {journal} {\bibinfo
  {journal} {Z. Physik}\ }\textbf {\bibinfo {volume} {47}},\ \bibinfo {pages}
  {631} (\bibinfo {year} {1928})}\BibitemShut {NoStop}%
\bibitem [{\citenamefont {Lieb}\ \emph {et~al.}(1961)\citenamefont {Lieb},
  \citenamefont {Schultz},\ and\ \citenamefont {Mattis}}]{LiebSchultzMattis}%
  \BibitemOpen
  \bibfield  {author} {\bibinfo {author} {\bibfnamefont {E.}~\bibnamefont
  {Lieb}}, \bibinfo {author} {\bibfnamefont {T.}~\bibnamefont {Schultz}}, \
  and\ \bibinfo {author} {\bibfnamefont {D.}~\bibnamefont {Mattis}},\ }\bibinfo
  {title} {\emph {Two soluble models of an antiferromagnetic chain}},\
  \href@noop {} {\bibfield  {journal} {\bibinfo  {journal} {Ann. Phys.}\
  }\textbf {\bibinfo {volume} {16}},\ \bibinfo {pages} {407 } (\bibinfo {year}
  {1961})}\BibitemShut {NoStop}%
\bibitem [{\citenamefont {Blatt}\ and\ \citenamefont
  {Roos}(2012{\natexlab{b}})}]{BlattRoos12}%
  \BibitemOpen
  \bibfield  {author} {\bibinfo {author} {\bibfnamefont {R.}~\bibnamefont
  {Blatt}}\ and\ \bibinfo {author} {\bibfnamefont {C.}~\bibnamefont {Roos}},\
  }\bibinfo {title} {\emph {Quantum simulations with trapped ions}},\
  \href@noop {} {\bibfield  {journal} {\bibinfo  {journal} {Nature Phys.}\
  }\textbf {\bibinfo {volume} {8}},\ \bibinfo {pages} {277} (\bibinfo {year}
  {2012}{\natexlab{b}})}\BibitemShut {NoStop}%
\bibitem [{\citenamefont {Schmied}\ \emph {et~al.}(2011)\citenamefont
  {Schmied}, \citenamefont {Wesenberg},\ and\ \citenamefont
  {Leibfried}}]{schmied2011quantum}%
  \BibitemOpen
  \bibfield  {author} {\bibinfo {author} {\bibfnamefont {R.}~\bibnamefont
  {Schmied}}, \bibinfo {author} {\bibfnamefont {J.~H.}\ \bibnamefont
  {Wesenberg}}, \ and\ \bibinfo {author} {\bibfnamefont {D.}~\bibnamefont
  {Leibfried}},\ }\bibinfo {title} {\emph {Quantum simulation of the hexagonal
  Kitaev model with trapped ions}},\ \href@noop {} {\bibfield  {journal}
  {\bibinfo  {journal} {New J. Phys.}\ }\textbf {\bibinfo {volume} {13}},\
  \bibinfo {pages} {115011} (\bibinfo {year} {2011})}\BibitemShut {NoStop}%
\bibitem [{\citenamefont {Mielenz}\ \emph {et~al.}(2015)\citenamefont
  {Mielenz}, \citenamefont {Kalis}, \citenamefont {Wittemer}, \citenamefont
  {Hakelberg}, \citenamefont {Schmied}, \citenamefont {Blain}, \citenamefont
  {Maunz}, \citenamefont {Leibfried}, \citenamefont {Warring},\ and\
  \citenamefont {Schaetz}}]{mielenz2015freely}%
  \BibitemOpen
  \bibfield  {author} {\bibinfo {author} {\bibfnamefont {M.}~\bibnamefont
  {Mielenz}}, \bibinfo {author} {\bibfnamefont {H.}~\bibnamefont {Kalis}},
  \bibinfo {author} {\bibfnamefont {M.}~\bibnamefont {Wittemer}}, \bibinfo
  {author} {\bibfnamefont {F.}~\bibnamefont {Hakelberg}}, \bibinfo {author}
  {\bibfnamefont {R.}~\bibnamefont {Schmied}}, \bibinfo {author} {\bibfnamefont
  {M.}~\bibnamefont {Blain}}, \bibinfo {author} {\bibfnamefont
  {P.}~\bibnamefont {Maunz}}, \bibinfo {author} {\bibfnamefont
  {D.}~\bibnamefont {Leibfried}}, \bibinfo {author} {\bibfnamefont
  {U.}~\bibnamefont {Warring}}, \ and\ \bibinfo {author} {\bibfnamefont
  {T.}~\bibnamefont {Schaetz}},\ }\bibinfo {title} {\emph {Freely configurable
  quantum simulator based on a two-dimensional array of individually trapped
  ions}},\ \href@noop {} {\bibfield  {journal} {\bibinfo  {journal}
  {arXiv:1512.03559}\ } (\bibinfo {year} {2015})}\BibitemShut {NoStop}%
\bibitem [{\citenamefont {Gluza}(2018)}]{github}%
  \BibitemOpen
  \bibfield  {author} {\bibinfo {author} {\bibfnamefont {M.}~\bibnamefont
  {Gluza}},\ }\href@noop {} {}\bibinfo {howpublished}
  {\url{https://github.com/marekgluza/Fidelity_witnesses_example}} (\bibinfo
  {year} {2018})\BibitemShut {NoStop}%
\bibitem [{\citenamefont {Wimmer}(2012)}]{PFAPACK}%
  \BibitemOpen
  \bibfield  {author} {\bibinfo {author} {\bibfnamefont {M.}~\bibnamefont
  {Wimmer}},\ }\bibinfo {title} {\emph {Algorithm 923: Efficient numerical
  computation of the Pfaffian for dense and banded skew-symmetric matrices}},\
  \href {\doibase 10.1145/2331130.2331138} {\bibfield  {journal} {\bibinfo
  {journal} {ACM Trans. Math. Softw.}\ }\textbf {\bibinfo {volume} {38}},\
  \bibinfo {pages} {30:1} (\bibinfo {year} {2012})}\BibitemShut {NoStop}%
\end{thebibliography}
\end{document}